\definecolor{darkgray}{rgb}{0.5,0.5,0.5}
\renewcommand{\epsilon}{\varepsilon}
\newcommand{\RR}{\mathbb{R}}
\newcommand{\CC}{\mathbb{C}}
\newcommand{\NN}{\mathbb{N}}
\newcommand{\ZZ}{\mathbb{Z}}
\newcommand{\PP}{\mathbb{P}}
\newcommand{\EE}{\mathbb{E}}
\DeclareMathOperator{\Tr}{Tr}
\newtheorem{theorem}{Theorem}[section]
\newtheorem{lemma}[theorem]{Lemma}
\newtheorem{proposition}[theorem]{Proposition}
\newtheorem{corollary}[theorem]{Corollary}
\theoremstyle{definition}
\newtheorem{definition}[theorem]{Definition}
\newtheorem*{hypothesis}{Hypothesis}
\theoremstyle{remark}
\newtheorem{remark}[theorem]{Remark}
 \newcommand{\hm}[1]{\leavevmode{\marginpar{\tiny%
 $ \hbox to 0mm{\hspace*{-0.5mm} $ \leftarrow $ \hss}%
 \vcenter{\vrule depth 0.1mm height 0.1mm width \the\marginparwidth}%
 \hbox to
 0mm{\hss $ \rightarrow $ \hspace*{-0.5mm}} $ \\\relax\raggedright #1}}}
\begin{document}

\title[Wegner estimate for the random breather model]
{Conditional Wegner estimate for the standard random breather potential}

\author[M.~T\"aufer]{Matthias T\"aufer}
\author[I.~Veseli\'c]{Ivan Veseli\'c}
\address{ Fakult\"at f\"ur Mathematik,\,  TU-Chemnitz, Germany  }
\email{matthias.taeufer@mathematik.tu-chemnitz.de}
\urladdr{www.tu-chemnitz.de/mathematik/stochastik}

\thanks{
 \copyright 2014 by the authors. Faithful reproduction of this article, is permitted for non-commercial purposes.
{\today, \jobname.tex}}

\thanks{
This work has been partially supported by the DFG under grant \emph{Eindeutige Fortsetzungsprinzipien und Gleichverteilungseigenschaften von Eigenfunktionen}.
It has also profited from interactions with Francisco Hoecker-Escuti, Ivica Naki\'{c}, Martin Tautenhahn and Christoph Schumacher.
Part of these interactions have been supported by the binational German-Croatian 
DAAD project \emph{Scale-uniform controllability of partial differential equations}.}

\keywords{random potential, standard random breather potential, Schr\"odinger operators, Wegner estimate,}

\begin{abstract}
We prove a conditional Wegner estimate for Schr\"odinger operators with random potentials of breather type.
More precisely, we reduce the proof of the Wegner estimate to a scale free unique continuation principle.
The relevance of such unique continuation principles has been emphasized in previous papers, in particular in recent years. 

We consider the \emph{standard} breather model, meaning that the single site potential is the characteristic function of a ball or a cube.
While our methods work for a substantially larger class of random breather potentials, we discuss in this particular paper only the standard model in order to make the arguments and ideas easily accessible.
\end{abstract}

\maketitle

\section{Introduction}
A Wegner estimate is an upper bound on the expected number of eigenvalues in a prescribed energy interval of a finite box Hamiltonian.
The expectation here refers to the potential which is random. 
Wegner estimates have been derived for Hamiltonians living on $\ZZ^d$ or $\RR^d$, more precisely on bounded subsets of rectangular shape of these spaces. 
In the present note we do not put emphasis on presenting the history of various variants of the Wegner estimate, but rather refer to the monograph \cite{Veselic-07b} and recent papers  \cite{RojasMolinaV-13}, and \cite{Klein-13}.
Here we will be only interested in models on continuum space $\RR^d$. 
The most studied example in this situation is the so called alloy-type potential, sometimes also called continuum Anderson model. 
A particular feature of this model is that randomness enters the model via a countable number of random variables, and these r.v. influence the potential in a linear way.
In the model we study here this dependence is no longer linear, but becomes non-linear.
What remains, is the monotone dependence of the potential on the r.v.
The topic of the present note is to explain, how to effectively use this monotonicity.
This only works if it is possible to cast the monotonicity in a quantitative form.
With this respect we consider the study of the random breather model as paradigmatic for a better understanding of random Schr\"odinger operators with non-linear randomness.
Note that in the case that one aims to establish a Wegner estimate only in an energy region near the bottom of the spectrum, it is possible to overcome the monotonicity assumption.
A paradigmatic example is the random displacement model, cf.~\cite{KloppLNS-12}. 
However let us clarify that any Wegner estimate proven so far uses at least one of the following properties
\begin{enumerate}[(a)]
\item
The space dimension is one.
\item
The random potential has a quantitative monotonicity property.
\item
One restricts attention to an energy region near the spectral bottom (or some other spectral edge, however this requires again extra assumptions, like small disorder).
\end{enumerate}
A strategy to prove Wegner estimates without relying on any of the above properties seems to require a truly probabilistic approach, rather than an analyitic one, like averaging just over a finite number of random variables. 
\medskip

To our best knowledge random breather potentials have been first considered in the mathematical physics literature in the work \cite{CombesHM-96}.
A Wegner estimate for the random breather model was derived in \cite{CombesHN-01} and a Lifschitz tail bound, yielding localization, in \cite{KirschV-10}.
However, all these papers have a condition on the gradient of the single site potential, arising from  a linearization. 
Unfortunately, this excludes the most elementary single site potential, namely the characteristic function of a ball or a cube.
A simple situation where this was overcome is treated in \cite{Veselic-07} where a Lifschitz bound was proven. 
Actually this proof extends to very general breather models, as will be explained in \cite{SchumacherV}.
\medskip

The aim of this note is not to cover the  most general types of random breather potentials, but to concentrate on the simplest case and provide full proofs and calculations
accessible to 
non-specialists. For this purpose we spell out explicitly the theorems which we infer, rather than 
just giving references to earlier papers.
Our proof starts from a scale free unique continuation principle (SFUCP). 
Such an estimate has recently been proven, for the case that the magnetic vector potential vanishes.
It is announced in \cite{NakicTTV-14} and full proofs will be presented in \cite{NakicTTV-15}, which rely on Carleman estimates, interpolation inequalities and related PDE techniques.
The proofs in this note are more probabilistic in nature.

\subsection{Wegner estimate for the random breather model}
We prove a Wegner estimate, Theorem \ref{Wegner Breather}, for the Random Breather model.
In the following, $\Lambda_s(x) := x + (-s/2, s/2)^d$ is the $d$-dimensional open cube of side length $s>0$, centered at the point $x \in \RR^d$. $B_r(x)$ denotes the open ball of radius $r \geq 0$ around $x \in \RR^d$. If $x = 0$, we omit the $x$ and write $\Lambda_s$ or $B_r$. 
If the side length $s$ is fixed, we simply write $\Lambda$.
Let $0 \leq \omega_{-} < \omega_{+} < 1/2$ and let $\mu$ be a probability measure on $\RR$ with bounded density $\nu_\mu$ and support in $[\omega_{-}, \omega_{+}]$. 
We define the probability space 
\begin{equation*}
\left( \Omega, \mathcal{A}, \PP \right) = \left( \times_{i \in \ZZ^d} \RR, \otimes_{i \in \ZZ^d} \mathcal{B}(\RR), \otimes_{i \in \ZZ^d} \mu \right).
\end{equation*}
Here, $\mathcal{B}$ is the Borel $\sigma$-algebra. For $\omega \in \Omega$ and $j \in \ZZ^d$ we denote the projection onto the $j$-th coordinate of $\Omega$ by $\omega_j$. 
The $\{ \omega_j \}_{j \in \ZZ^d}$ form a process of $[\omega_{-}, \omega_{+}]$-valued independent and identically distributed random variables on $\ZZ^d$. For $i,j \in \ZZ^d$ and $\delta \in \RR$ we define $\omega + \delta$ and $\omega + \delta e_i \in \Omega$ by
\begin{align*}
\left( \omega + \delta \right)_j &:= \omega_j + \delta \mbox{ for all } j \in \ZZ^d\\
\left( \omega + \delta e_i \right)_j &:= 
\begin{cases}
\omega_j + \delta &\mbox{ if } j=i\\
\omega_j &\mbox{ if } j \neq i.
\end{cases}
\end{align*}
\begin{definition}
We consider a  nonnegative and self-adjoint magnetic Schr\"odinger operator with vector potential $A \in L^2_{\mathrm{loc}}(\RR^d, \RR^d)$
\begin{equation*}
H_A := ( - i \nabla - A)^2 
\end{equation*}
on $L^2(\RR^d)$. 
We define a random potential by   
\begin{align}
& V_\omega(x) := \sum_{j \in \ZZ^d} \chi_{B_{\omega_j}}(x-j)\quad \mbox{or} \label{Breather model balls} \\
& V_\omega(x) := \sum_{j \in \ZZ^d} \chi_{\Lambda_{2 \omega_j}}(x-j) \label{Breather model cubes}
\end{align}
and define the corresponding random operator $H_\omega := H_A + V_\omega$.
In both cases we call  $H_\omega $ the \emph{standard random breather model}.
For $L \in \NN_{\mathrm{odd}} = \{ 1,3,... \}$ we define the restriction $H_{\omega, L}$ of $H_\omega$ to $\Lambda_L$ with Dirichlet boundary conditions and the restriction of the potential $V_{\omega,L} : \Lambda_L \rightarrow \RR$.
\end{definition}
\begin{definition}
 Let $\delta \in (0, 1/2)$. A sequence $\{ x_j \}_{j \in \ZZ^d}$ is called \emph{$\delta$-equidistributed} if for every $j \in \ZZ^d$ we have $B_{\delta}(x_j) \subset \Lambda_1(j)$.
 For such a sequence and $L \in \NN_{\mathrm{odd}}$, we define $W_{\delta,L} : \Lambda_L \to \CC$ as the characteristic function of $\bigcup_{j \in \ZZ^d } B_\delta(x_j) \cap \Lambda_L$.
\end{definition}
We formulate a scale free quantitative unique continuation property, which the random operator $H_{\omega}$
may or may not have.
Denote by $\chi_I(H_{\omega, L})$ the spectral projector of $H_{\omega, L}$ onto an interval $I$.
\begin{hypothesis}[SFUCP]\label{hypothesis UCP}
\label{eq:ball_in_box} 
Given $b \in \RR$ there  are  $M\geq 1$, $\kappa \leq 1$  such that for all $L \in \NN_{\mathrm{odd}}$, almost all $\omega \in \Omega$ and for all $0 < \delta < 1/2 - \omega_{+}$ we have 
\begin{equation}\label{eq:abstractUCP}
\chi_{(- \infty , b]}(H_{\omega, L}) \ W_{\delta,L} \ \chi_{(- \infty , b]}(H_{\omega, L}) 
\geq  \kappa \, \delta^{1/M} \chi_{(- \infty , b]}(H_{\omega, L}) .
\end{equation}
Inequality \eqref{eq:abstractUCP} is understood in the sense of quadratic forms.
\end{hypothesis}
\medskip
\begin{theorem}[Wegner estimate for the Random Breather model]\label{Wegner Breather}
Assume that the standard random breather model satisfies hypothesis SFUCP. Let $b \in \RR$. 
Then there are constants  $C=C(d,b) \in (0,\infty)$, $\epsilon_{\max}=\epsilon_{\max}(\kappa, M, \omega_{+}) \in (0,\infty)$
such that for all $L \in \NN_{\mathrm{odd}}$, all $0 < \epsilon \leq \epsilon_{\max}$, $E \in \RR$ with
$[E-\epsilon, E+\epsilon] \subseteq (- \infty , b-1]$
we have 
\begin{equation}\label{eq: Wegner Breather}
\EE \left[ \mathrm{Tr} \left[ \chi_{[E- \epsilon, E + \epsilon]}(H_{\omega,L}) \right] \right]
\leq 
C  
\sqrt[M]{4 \kappa^{-1}} 
\|\nu_\mu\|_\infty
\epsilon^{1/M} 
\left\lvert\ln \epsilon \right\rvert^d L^d.
\end{equation}
$\epsilon_{\max}$ can be chosen as
\begin{equation*}
\epsilon_{\max} =
\frac{\kappa}{4} 
\left( \frac{1/2 - \omega_{+}}{2} \right)^M.
\end{equation*}
\end{theorem}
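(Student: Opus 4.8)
The strategy is the classical Wegner scheme of Combes--Hislop--Klopp, adapted to the non-linear (breather) randomness by exploiting a quantitative monotonicity in the single-site radii. First I would reduce the trace estimate to an averaged spectral-shift-type bound: using that the trace of a spectral projector onto a small interval can be controlled by $\Tr[\rho(H_{\omega,L})]$ for a smoothed indicator $\rho$, and Chebyshev--Markov together with the layer-cake/finite-rank nature of $\chi_{[E-\epsilon,E+\epsilon]}(H_{\omega,L})$, it suffices to bound $\EE\big[\Tr\,\chi_{[E-\epsilon,E+\epsilon]}(H_{\omega,L})\big]$ by the expected variation of eigenvalue counting functions under simultaneous shifts $\omega\mapsto\omega+\delta$ of all coordinates. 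Concretely, I would write the trace as a sum over $j$ of single-coordinate pieces and, via the spectral averaging / Stollmann-type lemma, reduce to estimating, for each $j$, the integral $\int \big(N_{\omega+t e_j}(E+\epsilon)-N_{\omega+t e_j}(E-\epsilon)\big)\,\nu_\mu(\omega_j)\,d\omega_j$ in terms of the distortion of the potential produced by increasing $\omega_j$.

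The heart of the matter is the quantitative monotonicity step. Increasing every $\omega_j$ by a common $\delta>0$ enlarges each ball $B_{\omega_j}$ to $B_{\omega_j+\delta}$, so the potential increases by at least $\chi$ on the annular shells $B_{\omega_j+\delta}\setminus B_{\omega_j}$. Choosing a $\delta$-equidistributed sequence $\{x_j\}$ with $B_\delta(x_j)$ sitting inside such a shell — which is possible precisely because $\omega_j\le\omega_+<1/2$ and $\delta<1/2-\omega_+$ — gives the operator inequality $V_{\omega+\delta,L}-V_{\omega,L}\ge W_{\delta,L}$ in the form sense. Hypothesis SFUCP then converts this into a genuine spectral gap: on the range of $\chi_{(-\infty,b]}(H_{\omega,L})$ one has $H_{\omega+\delta,L}\ge H_{\omega,L}+\kappa\,\delta^{1/M}$, i.e. a shift of the whole low-lying spectrum by $\kappa\delta^{1/M}$. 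This is where the exponent $1/M$ and the factor $\kappa$ in \eqref{eq: Wegner Breather} enter, and it is the step I expect to be the main obstacle: one must be careful that the spectral projectors for the two operators are comparable (a standard but slightly delicate point, handled by sandwiching energies below $b$ and using that $[E-\epsilon,E+\epsilon]\subseteq(-\infty,b-1]$, which leaves room of size $1$), and that the monotonicity in the common shift $\delta$ really dominates $\epsilon$; this forces the restriction $\epsilon\le\epsilon_{\max}$ with the stated value, coming from solving $\kappa\delta^{1/M}\ge 4\epsilon$ with the maximal admissible $\delta=(1/2-\omega_+)/2$.

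With the spectral shift in hand, the remaining argument is the standard one. Comparing $N_{\omega}(E+\epsilon)$ with $N_{\omega+\delta}(E-\epsilon)$ for $\delta$ chosen so that $\kappa\delta^{1/M}\ge 4\epsilon$ — i.e. $\delta=(4\kappa^{-1}\epsilon)^M$, which produces the factor $\sqrt[M]{4\kappa^{-1}}$ — gives that the counting function inside the window $[E-\epsilon,E+\epsilon]$ is controlled by the difference of counting functions at shifted energies, and averaging this telescoping difference against the density $\nu_\mu$ (bounded by $\|\nu_\mu\|_\infty$) yields a bound proportional to $\|\nu_\mu\|_\infty\,\delta\cdot(\text{number of eigenvalues in a bounded window})$. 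The last factor is where the logarithm appears: a standard Weyl-type / a priori eigenvalue bound for $H_{\omega,L}=H_A+V_{\omega,L}$ on $\Lambda_L$ with Dirichlet conditions shows the number of eigenvalues below $b$ is $O(L^d)$, but the self-improving interplay with the $\epsilon^{1/M}$-scale — one covers the window by $\sim|\ln\epsilon|$ sub-windows of geometrically shrinking size and sums the geometric series — contributes the extra $|\ln\epsilon|^d$ (the power $d$ reflecting that the covering is done coordinatewise in the $d$ independent directions of the shift). Collecting the constants $C=C(d,b)$ from the eigenvalue bound, $\sqrt[M]{4\kappa^{-1}}$ from the choice of $\delta$, $\|\nu_\mu\|_\infty$ from the averaging, and $\epsilon^{1/M}|\ln\epsilon|^dL^d$ from the scaling and volume, gives exactly \eqref{eq: Wegner Breather}.
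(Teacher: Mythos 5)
Your overall skeleton matches the paper's: the smoothed indicator $\rho$ with $\|\rho'\|_\infty\le 1/\epsilon$, the observation that raising every $\omega_j$ by $\delta$ increases the potential by at least the characteristic function of an equidistributed union of balls inside the annuli, so that SFUCP plus min--max lifts every eigenvalue below $b$ by $\kappa$ times a power of $\delta$, the choice of $\delta$ so that this lift dominates $4\epsilon$ (whence $\epsilon_{\max}$), and a site-by-site (telescoping) averaging against the bounded density $\nu_\mu$. Two small slips: a ball of radius $\delta$ cannot sit inside an annulus of width $\delta$ (the paper uses radius $\delta/2$, which only changes constants), and solving $\kappa\delta^{1/M}\ge 4\epsilon$ gives $\delta\sim(4\epsilon/\kappa)^{M}$, not the factor $\sqrt[M]{4\kappa^{-1}}\,\epsilon^{1/M}$ directly; the stated form then follows only because $4\epsilon/\kappa\le1$ implies $(4\epsilon/\kappa)^{M}\le(4\epsilon/\kappa)^{1/M}$ for $M\ge1$.

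The genuine gap is in the last step, where you bound the averaged telescoping terms by ``$\|\nu_\mu\|_\infty\,\delta\cdot(\text{number of eigenvalues in a bounded window})$'' and attribute the $|\ln\epsilon|^d$ to covering the energy window by $\sim|\ln\epsilon|$ sub-windows ``coordinatewise in the $d$ independent directions of the shift''. Neither point is correct. After telescoping over the $N\sim L^d$ sites and applying the averaging lemma, each term is $\|\nu_\mu\|_\infty\,\delta\,\bigl[\Theta_n(\omega_++\delta)-\Theta_n(\omega_-)\bigr]$, i.e.\ a trace of $\rho(H_1^\Lambda)-\rho(H_0^\Lambda)$ where $H_1^\Lambda-H_0^\Lambda$ is a \emph{single-site} potential. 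Bounding this by the number of eigenvalues of the box operator below $b$ gives $O(L^d)$ per site and hence $O(L^{2d})$ in total, which destroys the theorem. The indispensable ingredient, absent from your proposal, is an $L$-\emph{independent} bound on this single-site trace difference: in the paper it is the spectral shift function estimate of \cite{HundertmarkKNSV-06} (Section~\ref{s:SSF}, Proposition~\ref{Krein}),
\begin{equation*}
\Tr\bigl[g(H_1^\Lambda)-g(H_0^\Lambda)\bigr]\le K_1e^{b}+K_2\bigl(\ln(1+\|g'\|_\infty)\bigr)^d\|g'\|_1 ,
\end{equation*}
which with $g'=\rho'$, $\|\rho'\|_1=1$, $\|\rho'\|_\infty\le1/\epsilon$ is precisely the source of the $|\ln\epsilon|^d$. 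That bound rests on the super-exponential decay of the singular values of $e^{-H_1^\Lambda}-e^{-H_0^\Lambda}$ (Dirichlet decoupling and Feynman--Kac estimates), not on any covering of the energy interval: the window $[E-\epsilon,E+\epsilon]$ is one-dimensional and there are no ``$d$ directions'' in which to cover it. Without this volume-free single-site spectral shift bound (or an equivalent substitute), your scheme only yields a bound of order $L^{2d}$ and the claimed estimate \eqref{eq: Wegner Breather} does not follow.
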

\begin{remark}
\begin{enumerate}[(i)]
 \item We can give an explicit bound on the constant $C$, namely 
\begin{equation*}
C \leq 2 \cdot 32^d (2 e^b \cdot (d+1)! + 2^d).
\end{equation*}
 \item The $\lvert \ln \epsilon \rvert^d$ term can be hidden by choosing a slightly smaller $\tilde{M} < M$ and a different constant $\tilde{C} > 0$. Hiding also the factor $\sqrt[M]{4\kappa^{-1}} \|\nu_\mu\|_\infty$ in this constant, we obtain a bound of the form
\begin{equation*}
\EE \left[ \mathrm{Tr} \left[ \chi_{[E- \epsilon, E + \epsilon]}(H_{\omega,L}) \right] \right]
\leq
\tilde{C} \cdot \epsilon^{1/\tilde{M}} L^d.
\end{equation*}
This implies local Hoelder continutity of the integrated density of states with the exponent $1/\tilde M$ as in \cite{HundertmarkKNSV-06}
 \item If SFUCP only holds for $L \in \mathcal{I} \subset \NN_{\mathrm{odd}}$ then Theorem \ref{Wegner Breather} holds for $L \in \mathcal{I}$.
\end{enumerate}
\end{remark}

\section{Estimates on the spectral shift function}
\label{s:SSF}
We infer  here the estimates on the singular values of semingroup differences and the 
spectral shift function of two Schr\"odinger operators differing by a compactly supported potential
obtained in  \cite{HundertmarkKNSV-06}.
We reduce the estimates to the particular, simple situation we are dealing with. The constants are calculated explicitly and more accurately than in \cite{HundertmarkKNSV-06}.
We will be dealing here with a pair of operators: $H_0 = H_A + V_0$ with $V_0$ non-negative and bounded, and 
$H_1 := H_0 + V$, with $V$ non-negative, bounded, and supported in a cube $\Lambda_{1}$ of sidelength one.
Let $\Lambda = \Lambda_L$ be a cube of sidelength $L \in \NN_{\mathrm{odd}}$ and call $H_0^\Lambda$ and $H_1^\Lambda$ the restrictions of $H_0$ and $H_1$ onto $\Lambda$ with Dirichlet boundary conditions. 
Let
\begin{equation*}
V_{\mathrm{eff}}^\Lambda := e^{-H_1^\Lambda} - e^{-H_0^\Lambda}.
\end{equation*}
This is a compact operator and we will enumerate its singular values decreasingly by $\mu_1 \geq \mu_2 \geq ...$. 
Then we have the following Theorem:
\begin{theorem}[Theorem 1 from \cite{HundertmarkKNSV-06}]\label{thm:singular-values}
For $n >  N_0 := 4^d$, the singular values of $V_{\mathrm{eff}}^\Lambda$ obey
\begin{equation*}
\mu_n(V_{\mathrm{eff}}^\Lambda) \leq (\sqrt[4]{2d}+1) \exp\left(- \frac{ n^{1/d}}{16}\right)
\end{equation*}
\end{theorem}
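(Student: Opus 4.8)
The plan is to construct, for well-chosen parameters, an operator $R$ of small rank with $\lVert V_{\mathrm{eff}}^\Lambda-R\rVert$ exponentially small, and — this is the whole point — with $\operatorname{rank}R$ bounded \emph{uniformly in $L$}; then $\mu_n(V_{\mathrm{eff}}^\Lambda)\le\lVert V_{\mathrm{eff}}^\Lambda-R\rVert$ as soon as $n>\operatorname{rank}R$. The starting point is the Duhamel formula
\begin{equation*}
V_{\mathrm{eff}}^\Lambda=e^{-H_1^\Lambda}-e^{-H_0^\Lambda}=-\int_0^1 e^{-sH_1^\Lambda}\,V\,e^{-(1-s)H_0^\Lambda}\,\drm s ,
\end{equation*}
which I would split at $s=1/2$. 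On the piece $s\in[0,1/2]$ the factor $e^{-(1-s)H_0^\Lambda}$ carries an exponent $1-s\ge 1/2$, and I would (i) replace the global box $\Lambda$ by a concentric box $\Lambda_{R'}$ of odd side $R'$, large but $L$-independent, and (ii) cut off the high energies of $H_0^{\Lambda_{R'}}$ by its spectral projection $P^0_{<E}$; symmetrically on $s\in[1/2,1]$ with $H_1^{\Lambda_{R'}}$. The approximant is the resulting ``low--low'' piece $R=-\int_0^{1/2}e^{-sH_1^\Lambda}\,V\,P^0_{<E}\,e^{-(1-s)H_0^{\Lambda_{R'}}}\,\drm s$ together with its mirror image; and because $P^0_{<E}$ commutes with $e^{-uH_0^{\Lambda_{R'}}}$, this operator-valued integral factors through the finite-dimensional range of $P^0_{<E}$, so that $\operatorname{rank}R\le 2\max_i N_{H_i^{\Lambda_{R'}}}(E)$, where $N_{H}(E)$ counts the eigenvalues of $H$ below $E$.

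Both the error and the rank are controlled by the diamagnetic (Kato) inequality together with heat-kernel domination. Since $V_0,V\ge0$ and $A\in L^2_{\mathrm{loc}}$, the Feynman--Kac--It\^o representation yields, for $u>0$ and $x,y\in\RR^d$,
\begin{equation*}
0\le\bigl|e^{-uH_i^\Lambda}(x,y)\bigr|\le p_u^{\RR^d}(x,y)=(4\pi u)^{-d/2}e^{-|x-y|^2/(4u)},
\end{equation*}
\begin{equation*}
\bigl|e^{-uH_0^\Lambda}(x,y)-e^{-uH_0^{\Lambda_{R'}}}(x,y)\bigr|\le p_u^{\RR^d}(x,y)-p_u^{\Lambda_{R'}}(x,y).
\end{equation*}
Since $V=V\chi_{\Lambda_1}$ and $\Lambda_1$ sits in $\Lambda_{R'}$ with margin $\asymp R'$, a Schur-test estimate and the reflection bound for the second right-hand side turn the box-replacement error into $\lVert V(e^{-uH_0^\Lambda}-e^{-uH_0^{\Lambda_{R'}}})\rVert\le C_d\lVert V\rVert_\infty e^{-c_d(R'-1)^2}$ for $u\le1$, while the high-energy error is $\lVert V\,P^0_{\ge E}\,e^{-uH_0^{\Lambda_{R'}}}\rVert\le\lVert V\rVert_\infty e^{-uE}\le\lVert V\rVert_\infty e^{-E/2}$ on $[0,1/2]$. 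For the rank, the Gaussian diagonal bound and a Chebyshev/Laplace-transform argument give
\begin{equation*}
N_{H_i^{\Lambda_{R'}}}(E)\le\inf_{s>0}e^{sE}\,\Tr e^{-sH_i^{\Lambda_{R'}}}\le\inf_{s>0}e^{sE}(R')^d(4\pi s)^{-d/2}=C_d\,(R')^d E^{d/2}.
\end{equation*}

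Collecting these estimates, $\mu_n(V_{\mathrm{eff}}^\Lambda)\le C_d\lVert V\rVert_\infty\bigl(e^{-c_d(R'-1)^2}+e^{-E/2}\bigr)$ whenever $n>2C_d(R')^d E^{d/2}$, with all constants independent of $L$. The endgame is an optimisation: choosing $E\asymp(R'-1)^2\asymp n^{1/d}$ equalises the two exponentially small errors while keeping $(R')^dE^{d/2}\asymp n$, so that $\mu_n(V_{\mathrm{eff}}^\Lambda)\le C_d'\lVert V\rVert_\infty e^{-c_d' n^{1/d}}$ uniformly in $L$. Since $\lVert V\rVert_\infty\le1$ in the situation at hand, it then only remains to chase constants — routing the $d$-dependence into the prefactor so that the exponent is exactly $n^{1/d}/16$, and identifying $N_0=4^d$ as the threshold beyond which this asymptotic bound is cleanly valid and dominates the crude trace bound $\mu_n(V_{\mathrm{eff}}^\Lambda)\le\lVert V_{\mathrm{eff}}^\Lambda\rVert_{\mathcal{S}_1}/n$ — to reach $\mu_n(V_{\mathrm{eff}}^\Lambda)\le(\sqrt[4]{2d}+1)\,e^{-n^{1/d}/16}$.

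The main obstacle is precisely the uniformity in $L$: a low-rank approximant built from the spectral projection of $H_i$ on the \emph{full} box $\Lambda$ has rank $\asymp L^d E^{d/2}$ and is therefore useless, so one is forced to localise the problem to a box of side $\asymp n^{1/(2d)}$ via the exponential decay of the heat kernels, after which the eigenvalue count no longer sees $L$. The remaining work is careful but routine: verifying that the operator-valued Duhamel integral with the spectral projection inserted in the middle genuinely has the asserted finite rank (the $\mathcal{B}^\ast\mathcal{B}$-argument using $[P^0_{<E},e^{-uH_0^{\Lambda_{R'}}}]=0$), carrying out the optimisation, and tracking the constants down to the precise values $1/16$, $\sqrt[4]{2d}+1$ and $4^d$.
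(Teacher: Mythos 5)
Your argument is sound and rests on the same two pillars as the paper's proof --- locality of the magnetic Dirichlet heat semigroup (diamagnetic/Feynman--Kac domination plus Gaussian tails) and a Weyl-type eigenvalue count $N_{H^{\Lambda_{R'}}}(E)\le C_d (R')^d E^{d/2}$ on an $L$-independent box of side $\asymp n^{1/(2d)}$ --- but it assembles them differently. The paper builds no finite-rank approximant: it compares $V_{\mathrm{eff}}^\Lambda$ with the analogous semigroup difference $A_R=e^{-H_1^R}-e^{-H_0^R}$ on the small box $\Lambda_{2R}$, bounds $\mu_n(A_R)\le\mu_n(e^{-H_j^R})\le \exp(-n^{2/d}R^{-2}/16)$ directly from the heat-trace eigenvalue bound (Lemma \ref{Lemma SSF}), and controls $\lVert V_{\mathrm{eff}}^\Lambda-A_R\rVert$ by a Feynman--Kac estimate on the Brownian paths that both hit $\mathrm{supp}\,V$ and exit $\Lambda_{2R}$ within unit time; balancing with $R=n^{1/(2d)}$ produces the stated constants and the threshold $R>2$, i.e.\ $n>4^d$. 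Your route --- Duhamel formula split at $s=1/2$, box replacement plus spectral cutoff $P_{<E}$, the (correct) observation that each half of the integral factors through the finite-dimensional range of $P^0_{<E}$ resp.\ $P^1_{<E}$ because the projection commutes with its own semigroup, hence $\operatorname{rank}\le 2\max_i N_{H_i^{\Lambda_{R'}}}(E)$, and finally $\mu_n\le\lVert V_{\mathrm{eff}}^\Lambda-R\rVert$ for $n>\operatorname{rank}R$ --- is a legitimate alternative, and it makes the reason for uniformity in $L$ quite transparent: the rank only sees the small box. The probabilistic input you need (the kernel bound for the difference of the two Dirichlet semigroups and the exit-probability estimates behind your Schur test) is exactly the Hundertmark--Simon type domination the paper also invokes, so neither route is more elementary; yours trades the singular-value bookkeeping for an explicit low-rank construction.

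Two caveats. First, your error terms carry a factor $\lVert V\rVert_\infty$ (from $\lVert V P_{\ge E}e^{-uH}\rVert\le\lVert V\rVert_\infty e^{-uE}$ and from the Schur test), whereas the theorem as stated holds uniformly over all bounded $V\ge 0$ supported in a unit cube: the paper gets this for free from the bound $\lvert e^{-\int_0^1 V(b_s)\,\drm s}-1\rvert\le 1$ inside the Feynman--Kac expectation. For the application in this paper $0\le V\le 1$, so nothing is lost, but as a proof of the theorem as stated your version is slightly weaker unless you exploit $V\ge 0$ in the error estimates as well. Second, the specific numbers $1/16$, $\sqrt[4]{2d}+1$ and $N_0=4^d$ are artifacts of the paper's particular balancing; your optimisation will certainly yield a bound of the same shape $C_d\,e^{-c_d n^{1/d}}$, but landing exactly on those constants by ``routing the $d$-dependence into the prefactor'' is not automatic and would require redoing the bookkeeping --- immaterial for the Wegner estimate, which only uses the qualitative form, but worth stating honestly.
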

We start the proof with a lemma
\begin{lemma}\label{Lemma SSF}
Let $H = H_0$ or $H_1$ be as above and let $H^\mathcal{U}$ be the Dirichlet restriction of $H$ 
to an open set $\mathcal{U}$ with finite volume $|\mathcal{U}|$. 
Then the $n^{th}$ eigenvalue $E_n$ of $H^\mathcal{U}$ satisfies
\begin{equation}\label{equation Lemma SSF}
E_n \geq \frac{2 \pi d}{e} \left( \frac{n}{|\mathcal{U}|} \right)^{2/d}.
\end{equation}
\end{lemma}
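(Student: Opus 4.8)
The plan is to drop the non-negative potentials, reduce to the magnetic Dirichlet Laplacian, and then extract the eigenvalue bound from a heat-trace estimate by optimising over the time parameter. First, since $V_0 \geq 0$ and (in the case $H = H_1$) also $V \geq 0$, the quadratic form of $H^{\mathcal{U}}$ dominates that of the Dirichlet restriction $H_A^{\mathcal{U}} := ((-i\nabla - A)^2)^{\mathcal{U}}$ on the common form domain; by the min-max principle this gives $E_n(H^{\mathcal{U}}) \geq E_n(H_A^{\mathcal{U}})$ for every $n$, so it suffices to prove \eqref{equation Lemma SSF} for $H_A^{\mathcal{U}}$. Since $|\mathcal{U}| < \infty$, the form domain embeds compactly into $L^2(\mathcal{U})$, hence $H_A^{\mathcal{U}}$ has purely discrete spectrum and the eigenvalues $0 \leq E_1 \leq E_2 \leq \dots$ (of $H_A^{\mathcal{U}}$) are well defined.

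Next I would establish the heat-trace bound
\begin{equation*}
\Tr\bigl( e^{-t H_A^{\mathcal{U}}} \bigr) \leq \frac{|\mathcal{U}|}{(4\pi t)^{d/2}}, \qquad t > 0 .
\end{equation*}
The semigroup $e^{-t H_A^{\mathcal{U}}}$ has a jointly continuous integral kernel $p_t^{\mathcal{U},A}(x,y)$, and the diamagnetic inequality together with domain monotonicity of Dirichlet heat kernels yields the pointwise domination
\begin{equation*}
\bigl| p_t^{\mathcal{U},A}(x,y) \bigr| \leq p_t^{\mathcal{U}}(x,y) \leq (4\pi t)^{-d/2} \exp\!\left( -\frac{|x-y|^2}{4t} \right),
\end{equation*}
where $p_t^{\mathcal{U}}$ is the kernel of the free (nonmagnetic) Dirichlet heat semigroup on $\mathcal{U}$ and the last expression is the heat kernel of $-\Delta$ on $\RR^d$. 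Setting $x = y$ and integrating over $\mathcal{U}$ gives the claimed trace bound.

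Finally, since $E_1 \leq \dots \leq E_n$, for every $t > 0$ we have
\begin{equation*}
n\, e^{-t E_n} \leq \sum_{k=1}^{n} e^{-t E_k} \leq \Tr\bigl( e^{-t H_A^{\mathcal{U}}} \bigr) \leq \frac{|\mathcal{U}|}{(4\pi t)^{d/2}} ,
\end{equation*}
hence $E_n \geq \frac{1}{t} \ln\!\bigl( n (4\pi t)^{d/2} / |\mathcal{U}| \bigr)$ for all $t > 0$. A short computation shows that the right-hand side, as a function of $t$, attains its maximum at $t = e \cdot \bigl( |\mathcal{U}| / (n (4\pi)^{d/2}) \bigr)^{2/d}$; inserting this value collapses the logarithm to $d/2$ and produces exactly $E_n \geq \frac{2\pi d}{e} (n/|\mathcal{U}|)^{2/d}$, as desired. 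The \emph{main obstacle} is the heat-kernel domination step: one must verify that with $A$ only in $L^2_{\mathrm{loc}}(\RR^d,\RR^d)$ the magnetic Dirichlet form still generates a positivity-dominated semigroup satisfying the diamagnetic inequality, and that the resulting kernel is dominated by the free Dirichlet one. This is classical, but it is precisely where the hypotheses on $A$ enter, and it is the only part of the argument that is not an elementary optimisation.
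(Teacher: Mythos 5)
Your argument is correct and is essentially the paper's: drop the nonnegative potentials, dominate the magnetic Dirichlet heat kernel by the free Gaussian via the diamagnetic inequality and domain monotonicity, obtain the heat-trace bound $|\mathcal{U}|(4\pi t)^{-d/2}$, and optimise in $t$ (the paper phrases this last step through the eigenvalue counting function with the choice $t=d/(4E)$, which is the same computation as your optimisation). The only cosmetic difference is that the paper never evaluates the kernel on the diagonal: it writes $\Tr\bigl(e^{-2tH^{\mathcal{U}}}\bigr)\leq \bigl\|e^{-tH_A^{\mathcal{U}}}\bigr\|_{\mathrm{HS}}^2=\int\!\!\int |e^{-tH_A^{\mathcal{U}}}(x,y)|^2\,\mathrm{d}x\,\mathrm{d}y$, which sidesteps the kernel-continuity issue you flag as your main obstacle while yielding the same trace bound.
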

\begin{proof}
We have $H^U \geq  H_A^\mathcal{U}$ where $H_A^\mathcal{U}$ is the Dirichlet restriction of $H_A$ to $\mathcal{U}$.
Hence
\begin{equation*}
\mathrm{Tr} \left( e^{-2 t H^\mathcal{U}} \right) \leq \mathrm{Tr} \left( e^{-2 t H_A^\mathcal{U}} \right) = \left\| e^{-t H_A^\mathcal{U}} \right\|^2_{\mathrm{HS}}
= \int \int_{\mathcal{U} \times \mathcal{U}} | e^{-t H_A^\mathcal{U}}(x,y)|^2 \mathrm{d}x\ \mathrm{d}y
\end{equation*}
where $\| \cdot \|_{\mathrm{HS}}$ denotes the Hilbert-Schmidt norm. 
The diamagnetic inequality for the 
Schr\"odinger semigroup, see \cite[Remark 1.2.iii]{HundertmarkS-04} implies the pointwise bound
\begin{align*}
 |e^{-t H_A^{\mathcal{U}}}(x,y)| \leq e^{t \Delta^\mathcal{U}}(x,y) .
\end{align*}
Using that the kernel of the Dirichlet semigroup is bounded by the free kernel, that is
\begin{align*}
&\int_{\mathcal{U}} \lvert e^{-t \Delta^\mathcal{U}}(x,y) \rvert^2 \mathrm{d}y
\leq \int_{\mathcal{U}} \lvert e^{-t \Delta}(x,y)\rvert^2 \mathrm{d}y 
= \int_{\mathcal{U}} (4 \pi t)^{-d} \exp \left( - \frac{ |x-y|^2}{2t} \right) \mathrm{d}y\\
&\quad \leq  (8 \pi t)^{-d/2}  \int_{\RR} (2 \pi t)^{-d/2} \exp \left( - \frac{|x-y|^2}{2t} \right) \mathrm{d}y 
=  (8 \pi t)^{-d/2},
\end{align*}
see \cite{BroderixHL-00} for the boundedness by the free kernel and \cite{Evans-98} Theorem 2.3.1. for the free kernel, 
we estimate
\begin{equation*}
\| e^{-t H_A^U} \|^2_{\mathrm{HS}} 
\leq |\mathcal{U}| ( 8 \pi t )^{-d/2}.
\end{equation*}
Thus
\begin{equation*}
\mathrm{Tr} (e^{-2tH^\mathcal{U}}) \leq | \mathcal{U}| ( 8 \pi t )^{-d/2}.
\end{equation*}
Denote by $\mathcal{N}^\mathcal{U}(E)$ the number of eigenvalues of $H^\mathcal{U}$ smaller or equal to $E$. 
We estimate using the above bound
\begin{align*}
\mathcal{N}^\mathcal{U}(E)
& = \int_{- \infty}^E \mathrm{d} \mathcal{N}^\mathcal{U}(s) 
\leq e^{2tE} \int_{- \infty}^\infty e^{-2ts} \mathrm{d} \mathcal{N}^\mathcal{U}(s) \\
& = e^{2tE} \mathrm{Tr}(e^{-2tH^\mathcal{U}} )
 \leq | \mathcal{U}| \cdot ( 8 \pi t)^{-d/2} e^{2tE} = | \mathcal{U} | \left( \frac{e E}{2 \pi d} \right)^{d/2},
\end{align*}
where in the last equality $t := \frac{d}{4E}$ has been chosen. Since $n \leq \mathcal{N}^\mathcal{U}(E_n)$, this implies \eqref{equation Lemma SSF}.
\end{proof}
\begin{proof}[Proof of Theorem \ref{thm:singular-values}]
The $n^{th}$ singular value will be estimated by Dirichlet decoupling at a 
scale $R_n$ which monotonously depends on $n$.
Recall that $\mathrm{supp}(V) \subseteq \Lambda_{1}(x)$ for some $x \in \RR^d$. Choose a large 
$R = R_n > 2$ to be specified later. Call $H_j^R$ ($j = 0$ or $1$) the Dirichlet restriction of $H_j$ onto $\Lambda_{2R} = (-R; R)^{d}$ and let
\begin{equation*}
A_R := e^{-H_1^R} - e^{-H_0^R} \quad \mbox{and} \quad D_R := V_{\mathrm{eff}}^\Lambda - A_R.
\end{equation*}
We apply Lemma \ref{Lemma SSF} and find
\begin{equation*}
\mu_n(e^{-H_j^R}) \leq  \exp(-\frac{  \pi d}{2 e } n^{2/d} R^{-2} ) \leq \exp \left( - \frac{1}{16} n^{2/d} R^{-2} \right)
\end{equation*}
for $j = 1,2$.
Since $A_R$ is the difference of two nonnegative operators, its singular values obey the same bound:
\begin{equation*}
\mu_n(A_R) \leq \mu_n(e^{-H_2^R}) \leq \exp \left( - \frac{1}{16} n^{2/d} R^{-2} \right).
\end{equation*}
If the operator $D_R$ is bounded, then $\mu_n(V_{\mathrm{eff}}^\Lambda) \leq \mu_n(A_R) + \| D_R \|$. Our goal is therefore to estimate the norm of $D_R$ by using the Feynman-Kac formula for Schr\"odinger semigroups with Dirichlet boundary conditions, see \cite{BroderixHL-00} and \cite{Simon-79c}.
Let $\EE_x$ and $\PP_x$ denote the expectation and probability for a Brownian motion $b_t$, starting at $x$. Let $\tau_\Lambda := \inf \{ t > 0| b_t \notin \Lambda \}$ be the exit time from $\Lambda$ and $\tau_R := \inf \{ t > 0| b_t \notin \Lambda_{2R} \}$ be the exit time from $\Lambda_{2R}$. Then
\begin{align*}
& (D_R f) (x) = \\ 
&= \EE_x \left[ e^{-i S_A(b)} \left( e^{- \int_0^1(V_0 + V)(b_s) \mathrm{d}s} - e^{- \int_0^1 V_0(b_s) \mathrm{d}s} \right) \chi_{\left[\tau_\Lambda > 1\right]}(b) \chi_{\left[\tau_R \leq 1\right]}(b) f(b_1) \right]
\end{align*}
where $S_A(b)$ is the real valued stochastic process corresponding to the purely magnetic part of the Schr\"odinger operator. 
To be precise, in order to use the Feynman-Kac formula from \cite{Simon-79c} one needs $\mathrm{div} A = 0$. Therefore we first choose the Coulomb gauge which implies $\mathrm{div} A = 0$ and then use gauge invariance as in \cite{Leinfelder-83} for $A \in L^2_{\mathrm{loc}}(\RR^d, \RR^d)$.
We take the modulus whence the magnetic part drops out and using furthermore that $\chi_{[\tau_\Lambda > 1]}(b) \leq 1$, we find
\begin{align*}
| D_R f |(x) \leq \EE_x \left[ e^{-\int_0^1 V_0(b_s) \mathrm{d}s} 
\cdot 
\lvert e^{-\int_0^1 V(b_s) \mathrm{d}s} - 1 \rvert 
\cdot
\chi_{\left[\tau_R \leq 1\right]}(b) 
\cdot
|f(b_1)| \right].
\end{align*}
Only Brownian paths which both visit $\mathrm{supp}(V)$ and leave $B_R$ within one unit of time contribute to the expectation. Thus, if $\tau_V$ is the hitting time for $\mathrm{supp} (V)$ and $\mathfrak{B} = \{ \tau_R \leq 1, \tau_V \leq 1 \}$, then
\begin{equation*}
|D_R f|(x) \leq \EE_x \left[ e^{-\int_0^1 V_0(b_s) \mathrm{d}s} 
\cdot
\lvert e^{-\int_0^1 V(b_s) \mathrm{d}s} - 1 \rvert 
\cdot 
\chi_\mathfrak{B}(b) 
\cdot
|f(b_1)| \right].
\end{equation*}
Applying H\"older's inequality yields
\begin{align*}
|D_n f|(x) 
& \leq \left( \EE_x \left[ e^{-8 \int_0^1 V_0(b_s) \mathrm{d}s} \right] \right)^{1/8} 
\left( \EE_x \left[ \lvert e^{-\int_0^1 V(b_s) \mathrm{d}s} - 1 \rvert^8 \right] \right)^{1/8}\\
& \times \left( \EE_x \left[ \chi_\mathfrak{B} (b) \right] \right)^{1/4} \left( \EE_x \left[ | f(b_1)|^2 \right] \right)^{1/2}.
\end{align*}
Since $V, V_0 \geq 0$, the first two terms are bounded by $1$.
\par
Next, we estimate $\EE_x [ \chi_\mathfrak{B} (b) ] = \PP_x ( \mathfrak{B} )$.
Letting $b_t = (b_t^{(1)}, ..., b_t^{(n)}) \in \RR^d$ and calling $\tau_r^{(j)} := \inf\{ t > 0  | b_t^{(j)} \notin (-r,r) \}$ the exit time of the $j^\mathrm{th}$ coordinate from the interval $(-r,r)$, where we choose $r := \mathrm{dist} \{ \mathrm{supp}\ V, \Lambda_{2R}^c \} \geq 3/2$, we have
\begin{align*}
\PP_0[ \tau_r \leq 1] 
&= \PP_0 \left[ \bigcup_{j = 1}^d \left\{ \tau_r^{(j)} \leq 1 \right\} \right]
 \leq \sum_{j = 1}^d \PP_0 \left[ \tau_r^{(j)} \leq 1 \right]
 = d \cdot \PP_0[ \tau_r^{(1)} \leq 1].
\end{align*} 
The projection onto the first coordinate of $(b_t)$ is a one-dimensional Brownian motion and by the reflection principle 
\begin{equation}
\PP_0[ \tau_r^{(1)} \leq 1 ] = 2 \PP_0[ |b_1^{(1)} \geq r ] = \frac{4}{ \sqrt{2 \pi} } \int_r^\infty e^{-x^2/2} \mathrm{d}x \leq \frac{4}{ \sqrt{2 \pi}} r^{-1} e^{-r^2/2}.
\end{equation}
Recalling that $r > 3/2$, and hence $\frac{4}{\sqrt{2 \pi}} r^{-1} \leq 2$ we find 
\begin{equation*}
\PP_0 [ \tau_r \leq 1] \leq 2d \cdot e^{-r^2/2}.
\end{equation*}
Since every path in $\mathfrak{B}$ must cover the distance $r \geq 3/2$ between $\mathrm{supp}(V)$ and the complement of $\Lambda_{2R}$, we find $\PP_x [ \mathfrak{B} ] \leq d \cdot e^{-r^2/2}$. 
We assumed that $R > 2$, hence $r \geq R - 1/2 \geq R/\sqrt{2}$. Then $\PP_x [ \mathfrak{B} ] \leq d e^{-R^2/4}$ and 
\begin{equation*}
|D_R f|(x) \leq \sqrt[4]{2d}  \cdot e^{-R^2/16} \left( \EE_x \lvert f(b_1) \rvert^2 \right)^{1/2} 
= \sqrt[4]{2d}  \cdot e^{-R^2/16} \left( ( e^\Delta |f|^2)(x) \right)^{1/2}.
\end{equation*}
Using the fact, that $e^\Delta$ is an $L^1$ contraction
\begin{equation*}
\| D_n f \|_2 \leq \sqrt[4]{2d} \cdot e^{-R^2/16} \| ( e^\Delta |f|^2) \|_1^{1/2} \leq \sqrt[4]{2d} \cdot e^{-R^2/16} \| f^2 \|_1^{1/2} = \sqrt[4]{2d} \cdot e^{-R^2/16} \|f\|_2.
\end{equation*} 
To balance between the two bounds obtained for $\mu_n(A_R)$ and $\|D_R\|$, we choose $R := n^{1/2d}$ and find
\begin{align*}
\mu_n (A_R) & \leq \exp ( - 1/16 \cdot n^{1/d}),\\
\| D_n \| & \leq \sqrt[4]{2d} \cdot \exp( - 1/16 \cdot n^{1/d} ),
\end{align*}
that is 
\begin{equation*}
\mu_n(V_{\mathrm{eff}}^\Lambda) \leq  (\sqrt[4]{2d}+1) \cdot \exp{ ( - 1/16 n^{1/d} ) }  .
\end{equation*}
We assumed $R = n^{1/2d} > 2$, thus this only works for
\begin{equation*}
n > N_0 = 4^d .
\end{equation*}
\end{proof}
Let now $g \in C^\infty(\RR)$ with compactly supported derivative. 
If $g(H_1) - g(H_0)$ is trace class then there is a unique function $\xi(\lambda, H_1, H_0)$
called the Lifshitz-Krein spectral shift function, such that
\begin{equation}\label{eq: Krein identity}
\mathrm{Tr} \left[ g(H_1) - g(H_0) \right] = \int  \xi(\lambda, H_1, H_0) \mathrm{d}g(\lambda).
\end{equation}
This is referred to as Krein's trace identity. $\xi(\cdot, H_1,H_0)$ is independent of the choice of $g$.
In fact, $g$ can be chosen from a substantially larger class of functions: 
\begin{proposition}[Chapter 8.9, Theorem 1 in \cite{Yafaev-92}]
Let $H_0$, $H_1$ be positive definite and $H_1^{- \tau} - H_0^{-\tau}$ trace class for some $\tau > 0$.
Furthermore let $g$ have two locally bounded derivatives and satisfy 
\begin{equation*}
| (\lambda^{\tau + 1} g^\prime(\lambda))^\prime| \leq C \lambda^{-1 - \epsilon},\ \mbox{as}\ \lambda \to \infty
\end{equation*}
for some $C, \epsilon > 0$.
Then \eqref{eq: Krein identity} holds for $g$.
\end{proposition}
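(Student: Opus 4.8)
The plan is to deduce the statement from the classical \emph{trace-class} form of Krein's theorem by passing to a bounded spectral variable. Since $H_0,H_1$ are positive definite, the operators $T_i:=H_i^{-\tau}$ are bounded, nonnegative and self-adjoint, with spectra contained in a common bounded interval $(0,\varrho]$, and $T_1-T_0$ is trace class by hypothesis. Thus the classical Birman--Krein theory applies to the pair $(T_1,T_0)$: there is a unique $\xi(\cdot,T_1,T_0)\in L^1(\RR)$ with
\[
\mathrm{Tr}\left[\psi(T_1)-\psi(T_0)\right]=\int\psi'(\mu)\,\xi(\mu,T_1,T_0)\,\mathrm{d}\mu
\]
for every $\psi$ from a suitable class of functions (e.g.\ the Besov class $B^1_{\infty,1}$ in modern terminology; it contains all $C^\infty$ functions with compactly supported derivative, which already pins $\xi$ down). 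The second ingredient is the \emph{invariance principle}: under the decreasing diffeomorphism $\phi(\lambda)=\lambda^{-\tau}$ of $(0,\infty)$ one has, with the orientation-reversing sign,
\[
\xi(\lambda,H_1,H_0)=-\,\xi\!\left(\lambda^{-\tau},\,T_1,\,T_0\right),
\]
and this is in fact how the object $\xi(\cdot,H_1,H_0)$ appearing in \eqref{eq: Krein identity} is defined in this generality (recall that $H_1-H_0$ itself is not trace class), consistently with and independently of the choice of $g\in C^\infty$ with $g'$ compactly supported.

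Given $g$ as in the statement, I set $f(\mu):=g(\mu^{-1/\tau})$ for $\mu>0$. On every interval $[\delta,\varrho]$ with $\delta>0$ the function $f$ inherits two bounded derivatives from $g$. Writing $\lambda:=\mu^{-1/\tau}$, a direct computation gives
\[
f'(\mu)=-\tfrac1\tau\,g'(\lambda)\,\lambda^{\tau+1},\qquad f''(\mu)=\tfrac1{\tau^2}\left(\lambda^{\tau+1}g'(\lambda)\right)'\lambda^{\tau+1},
\]
whence, by the growth hypothesis, $\lvert f''(\mu)\rvert\le\tfrac{C}{\tau^2}\,\lambda^{-1-\epsilon}\lambda^{\tau+1}=\tfrac{C}{\tau^2}\,\mu^{-1+\epsilon/\tau}$ near $\mu=0$. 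The exponent $-1+\epsilon/\tau$ is \emph{strictly} larger than $-1$, so $f''\in L^p\bigl((0,\varrho]\bigr)$ for some $p>1$; this is exactly what places $f$ (extended arbitrarily but smoothly off $(0,\varrho]$) into the perturbation class above, so that $f(T_1)-f(T_0)$ is trace class and satisfies the displayed trace identity. Since $g(H_i)=f(H_i^{-\tau})=f(T_i)$ by the functional calculus, $g(H_1)-g(H_0)$ is trace class and
\[
\mathrm{Tr}\left[g(H_1)-g(H_0)\right]=\int f'(\mu)\,\xi(\mu,T_1,T_0)\,\mathrm{d}\mu.
\]

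It then remains to substitute $\mu=\lambda^{-\tau}$ in this last integral and apply the invariance principle: the Jacobian $\mathrm{d}\mu=-\tau\lambda^{-\tau-1}\mathrm{d}\lambda$ (and the resulting reversal of the limits of integration), the factor $f'(\mu)=-\tfrac1\tau g'(\lambda)\lambda^{\tau+1}$, and the sign in $\xi(\lambda^{-\tau},T_1,T_0)=-\xi(\lambda,H_1,H_0)$ combine so that all powers of $\lambda$ and all factors $\tau$ and signs cancel, leaving
\[
\mathrm{Tr}\left[g(H_1)-g(H_0)\right]=\int g'(\lambda)\,\xi(\lambda,H_1,H_0)\,\mathrm{d}\lambda=\int\xi(\lambda,H_1,H_0)\,\mathrm{d}g(\lambda),
\]
which is \eqref{eq: Krein identity} for $g$. (The growth hypothesis also secures the absolute convergence of this last integral, since it forces $g'(\lambda)=O(\lambda^{-\tau-1})$ while $\xi(\cdot,H_1,H_0)$ grows only polynomially, by Weyl-type bounds in the spirit of Lemma~\ref{Lemma SSF}.)

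The step I expect to be the main obstacle is the soft-analysis bookkeeping rather than any single hard estimate: delimiting precisely the admissible class of $\psi$ in the trace-class Krein theorem and verifying that the $L^p$ bound on $f''$ just derived (this is where the $\epsilon>0$ in the growth condition is indispensable) really lands $f$ there, so that both trace-class-ness and the trace identity transfer; and tracking every sign in the invariance principle correctly, bearing in mind that $\xi(\cdot,H_1,H_0)$ only makes sense because $H_1^{-\tau}-H_0^{-\tau}$, and not $H_1-H_0$, is trace class. All of this is carried out in \cite[Chapter 8.9, Theorem 1]{Yafaev-92}, whose argument the above merely reorganizes.
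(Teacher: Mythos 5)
The paper does not prove this proposition at all --- it is imported verbatim from Yafaev's monograph (Chapter 8.9, Theorem 1 in \cite{Yafaev-92}) --- so there is no internal proof to compare against, and your sketch is essentially the cited source's own argument: pass to the pair $T_i=H_i^{-\tau}$, apply the trace-class Krein theorem to $f(\mu)=g(\mu^{-1/\tau})$, and undo the substitution via the invariance principle with the orientation-reversing sign. Your computations of $f'$, $f''$ and of the exponent $-1+\epsilon/\tau$ check out; the only point taken on faith (reasonably, since it is exactly what the classical theory you invoke supplies) is that $f'$ absolutely continuous with $f''\in L^p$, $p>1$, near the spectra of the $T_i$ --- including $\mu=0$, which lies in the closure of those spectra, so the smooth extension there needs the mild care you indicate --- is an admissible class for the bounded-operator trace formula; also, the closing appeal to Weyl-type bounds is unnecessary, since absolute convergence already follows from $\xi(\cdot,T_1,T_0)\in L^1$ and the boundedness of $f'$.
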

For such admissible functions, it is possible to make a change of variables
\begin{equation*}
\xi(\lambda, H_1, H_0) = \mathrm{sgn}(g^\prime)\xi(g(\lambda, g(H_1), g(H_0)),
\end{equation*}
see \cite{Yafaev-92}, Chapter 8.11. This is referred to as the invariance principle for the spectral shift function.
With our choice of $H_0$ and $H_1$, $g(\lambda) := \exp(-\lambda)$ is an admissible function, 
as can be seen via Lemma \ref{Lemma SSF}.
We also define functions $F_t : [0, \infty) \to [0, \infty)$ for $t > 0$ by
\begin{equation*}
F_t(x) := \int_0^x (\exp ( t y^{1/d}) - 1) \mathrm{d}y.
\end{equation*}
\begin{theorem}[Theorem 2 from \cite{HundertmarkKNSV-06}]\label{thm:SSF}
Let $\xi$ be the spectral shift function for the pair of operators $H_0^\Lambda$ and $H_1^\Lambda$. 
\begin{enumerate}[(i)]
  \item There is a constant $K_1$, depending on $t$ such that for small enough $t > 0$
\begin{equation*}
\int_{- \infty}^T F_t(|\xi(\lambda|) d \lambda \leq K_1 e^T < \infty.
\end{equation*}
$t$ can be chosen to be $t := 1/32$ in which case $K_1 \leq 2 \cdot 32^d \cdot (d+1)!$
\item There are constants $K_1$, $K_2$, only depending on $d$ such that for any bounded function $f$ 
with compact support within $(-\infty, b]$ we have
\begin{equation*}
\int f(\lambda) \xi(\lambda) \mathrm{d}\lambda \leq K_1 e^b + K_2 \left( \log ( 1 + \| f \|_\infty ) \right)^d \| f \|_1.
\end{equation*}
We may choose $K_1 := 2 \cdot 32^d (d+1)!$ and $K_2 := 32^d$.
\end{enumerate}
\end{theorem}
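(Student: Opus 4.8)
The plan is to derive both parts from the singular value bounds of Theorem~\ref{thm:singular-values} together with the invariance principle for the spectral shift function. The starting point is the identity $\xi(\lambda, H_1^\Lambda, H_0^\Lambda) = \xi(e^{-\lambda}, e^{-H_1^\Lambda}, e^{-H_0^\Lambda})$ up to a sign (the invariance principle, using that $g(\lambda)=e^{-\lambda}$ is admissible by Lemma~\ref{Lemma SSF}), which transports the problem to the compact operator $V_{\mathrm{eff}}^\Lambda = e^{-H_1^\Lambda} - e^{-H_0^\Lambda}$ whose singular values decay like $e^{-n^{1/d}/16}$. The key analytic input is the standard bound relating the spectral shift function of a pair differing by a trace-class-type perturbation to the singular values of the difference: for the exponentiated operators one has $|\xi(e^{-\lambda})| \le$ (number of singular values of $V_{\mathrm{eff}}^\Lambda$ exceeding a threshold comparable to the spectral gap at $\lambda$), which via Theorem~\ref{thm:singular-values} gives a pointwise bound $|\xi(\lambda)| \lesssim (1 + \max(0,\lambda))^d + N_0$ after inverting the exponential decay. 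More precisely, $\mu_n(V_{\mathrm{eff}}^\Lambda) \le s$ as soon as $n \gtrsim |\ln s|^d$, so the counting function of singular values above level $e^{-\lambda}$ is $O((1+\lambda_+)^d + 4^d)$.

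For part (i), I would integrate $F_t(|\xi(\lambda)|)$ against $d\lambda$ over $(-\infty, T]$. Since $F_t(x) = \int_0^x (e^{t y^{1/d}} - 1)\,dy$ and $|\xi(\lambda)|$ grows at most polynomially of degree $d$ in $\lambda_+$, the integrand $F_t(|\xi(\lambda)|)$ is dominated by $e^{t|\xi(\lambda)|^{1/d}} \cdot |\xi(\lambda)|$, and choosing $t = 1/32$ small enough relative to the constant $16$ in the singular value exponent makes $e^{t|\xi(\lambda)|^{1/d}}$ summable against the decaying tail, producing the bound $K_1 e^T$ with $K_1 \le 2 \cdot 32^d (d+1)!$. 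The factorial arises from integrating the degree-$d$ polynomial bound on $|\xi|$ and from the combinatorial count of singular values; I would track constants carefully here, using $N_0 = 4^d$ and the explicit prefactor $\sqrt[4]{2d}+1 \le 2$ (for the relevant range) in Theorem~\ref{thm:singular-values}.

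For part (ii), the idea is a Jensen-type / Young-inequality argument: write $\int f(\lambda)\xi(\lambda)\,d\lambda$ and split the region of integration according to whether $|\xi(\lambda)|$ is large or small. On the set where $|\xi|$ is bounded by a constant one uses $\|f\|_1$ directly; on the set where $|\xi|$ is large one exploits part (i) — the finiteness of $\int F_t(|\xi|)\,d\lambda \le K_1 e^b$ — together with the Legendre-transform inequality $ab \le F_t(a) + F_t^*(b)$, where the convex conjugate $F_t^*$ of $F_t$ behaves like $b(\log(1+b))^d$ up to constants. Applying this with $a = |\xi(\lambda)|$ and $b = |f(\lambda)|$ and integrating yields $\int |f\xi| \le \int F_t(|\xi|) + \int F_t^*(|f|) \le K_1 e^b + K_2 (\log(1+\|f\|_\infty))^d \|f\|_1$, where $\|f\|_\infty$ enters because $F_t^*(b)/b$ is increasing and can be pulled out at its maximum value over the support of $f$, leaving $\|f\|_1$. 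The constants $K_1 = 2\cdot 32^d(d+1)!$ and $K_2 = 32^d$ come out of the explicit form of $F_{1/32}$ and its conjugate.

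The main obstacle I anticipate is getting the explicit constants right in the pointwise bound on $\xi$: one must carefully convert the singular-value decay $\mu_n \le 2 e^{-n^{1/d}/16}$ into a sharp count of how many $\mu_n$ exceed $e^{-\lambda}$, correctly handle the exceptional range $n \le N_0 = 4^d$ where Theorem~\ref{thm:singular-values} does not apply (here one uses that $\|V_{\mathrm{eff}}^\Lambda\|$ is bounded and contributes at most $N_0$ to the count), and then propagate these into the $F_t$ and $F_t^*$ estimates so that the claimed numerical values $2\cdot 32^d(d+1)!$ and $32^d$ emerge. The conceptual steps are routine consequences of the invariance principle and Birman–Krein theory; the work is in the bookkeeping.
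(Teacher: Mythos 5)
Your part (ii) is essentially the paper's argument: Young's inequality with the Legendre transform $G_t(y)\leq y\bigl(\log(1+y)/t\bigr)^d$ of $F_t$, the bound from part (i) on $(-\infty,b]$, and pulling the $\log$ factor out at $\|f\|_\infty$; that is fine. The problem is part (i), where your key analytic input is not valid. You claim a \emph{pointwise} bound of the form ``$|\xi(e^{-\lambda})|\leq$ number of singular values of $V_{\mathrm{eff}}^\Lambda$ exceeding a threshold comparable to the spectral gap at $\lambda$'', leading to $|\xi(\lambda)|\lesssim(1+\lambda_+)^d+N_0$. No such standard bound exists. The only pointwise statement of this type is the rank bound ($|\xi|\leq n$ if the perturbation has rank $n$); truncating $V_{\mathrm{eff}}^\Lambda$ at rank $n$ leaves a small-norm remainder, and a perturbation of small norm can still shift many eigenvalues across a fixed energy when the eigenvalues cluster there. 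For the finite-box operators $H_j^\Lambda$ the level spacing near a fixed energy shrinks as $L\to\infty$, so any argument relying on a ``spectral gap at $\lambda$'' would produce $L$-dependent constants, while the whole point of the theorem (and of \cite{HundertmarkKNSV-06}) is that $K_1,K_2$ depend only on $d$ and $b$; indeed, volume-independent pointwise bounds on the spectral shift function are precisely what is \emph{not} available in this generality, which is why the result is formulated as an integrated (Orlicz-type) bound.

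The ingredient you are missing is the Hundertmark--Simon estimate from \cite{HundertmarkS-02}: for the trace class difference $V_{\mathrm{eff}}^\Lambda=e^{-H_1^\Lambda}-e^{-H_0^\Lambda}$ one has
\begin{equation*}
\int_{-\infty}^{\infty} F_t\bigl(|\xi(s,e^{-H_1^\Lambda},e^{-H_0^\Lambda})|\bigr)\,\mathrm{d}s
\;\leq\;\sum_{n=1}^{\infty}\mu_n(V_{\mathrm{eff}}^\Lambda)\bigl(F_t(n)-F_t(n-1)\bigr),
\end{equation*}
an integrated bound that never requires $\xi$ to be pointwise controlled. The paper's proof of (i) is: invariance principle to pass from $\xi(\lambda,H_1^\Lambda,H_0^\Lambda)$ to $\xi(e^{-\lambda},e^{-H_1^\Lambda},e^{-H_0^\Lambda})$, change of variables $s=e^{-\lambda}$ on $(-\infty,T]$ (with $\mathrm{d}\lambda=\mathrm{d}s/s\leq e^T\,\mathrm{d}s$, which is where the factor $e^T$ really comes from --- your proposal leaves this unexplained), then the displayed inequality, splitting the sum at $N_0=4^d$ (using $\mu_n\leq 1$ there) and using Theorem \ref{thm:singular-values} for $n>N_0$; with $t=1/32$ the series is dominated by $d!\cdot 32^d$ via a Gamma-function integral, giving $K_1\leq 2\cdot 32^d(d+1)!$. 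Your invariance-principle and exponentiation steps are the right start, but without the Hundertmark--Simon inequality (or some substitute for the false pointwise estimate) the proof of (i) does not go through, and part (ii) inherits the gap since it relies on (i).
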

\begin{corollary}
Let $g \in C^\infty(\RR)$ with $g'$ of compact support within $(-\infty,b]$ such that 
$g(H_1^\Lambda) -g(H_0^\Lambda)$ is trace class. Then
\begin{equation}
 \Tr[g(H_1^\Lambda) -g(H_0^\Lambda)]
\leq 32^d\left [ 2(d+1)! e^b + \left( \log ( 1 + \| g' \|_\infty ) \right)^d \| g' \|_1\right]
\end{equation}
\end{corollary}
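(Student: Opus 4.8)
The plan is to read the corollary off Krein's trace identity \eqref{eq: Krein identity} and part (ii) of Theorem \ref{thm:SSF}, with essentially no new work beyond bookkeeping. First, since $g\in C^\infty(\RR)$ has derivative of compact support and $g(H_1^\Lambda)-g(H_0^\Lambda)$ is trace class by hypothesis, the Krein identity applies to the pair $(H_0^\Lambda,H_1^\Lambda)$ and gives
\begin{equation*}
\Tr\bigl[g(H_1^\Lambda)-g(H_0^\Lambda)\bigr]=\int \xi(\lambda,H_1^\Lambda,H_0^\Lambda)\,\drm g(\lambda)=\int g'(\lambda)\,\xi(\lambda)\,\drm\lambda ,
\end{equation*}
where $\xi$ is the spectral shift function of the pair; recall from the discussion after \eqref{eq: Krein identity} that $\xi$ does not depend on the admissible $g$ used in the identity, so it coincides with the $\xi$ of Theorem \ref{thm:SSF} (there normalized through $g(\lambda)=\exp(-\lambda)$, which is admissible for $H_0^\Lambda,H_1^\Lambda$ by Lemma \ref{Lemma SSF}).

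Second, the function $f:=g'$ is bounded and compactly supported inside $(-\infty,b]$, so Theorem \ref{thm:SSF}(ii) applies verbatim with this $f$ and yields
\begin{equation*}
\int g'(\lambda)\,\xi(\lambda)\,\drm\lambda\le K_1 e^b+K_2\bigl(\log(1+\|g'\|_\infty)\bigr)^d\|g'\|_1,
\qquad K_1=2\cdot 32^d(d+1)!,\quad K_2=32^d .
\end{equation*}
Factoring out $32^d$ produces exactly the claimed bound
\begin{equation*}
\Tr\bigl[g(H_1^\Lambda)-g(H_0^\Lambda)\bigr]\le 32^d\Bigl[2(d+1)!\,e^b+\bigl(\log(1+\|g'\|_\infty)\bigr)^d\|g'\|_1\Bigr].
\end{equation*}

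The only delicate point — and it is the main, rather minor, obstacle — is the justification that $g$ lies in the admissibility class for the Krein identity and the invariance principle, i.e.\ that the change of variables relating the two normalizations of $\xi$ is legitimate; this is precisely where the trace-class hypothesis on $g(H_1^\Lambda)-g(H_0^\Lambda)$ is used, together with the already established admissibility of $\exp(-\lambda)$ for the pair. Once this is in place, the corollary is an immediate substitution; no further estimates on semigroup differences or on $\xi$ itself are needed, since all the analytic content is packaged in Theorem \ref{thm:SSF}.
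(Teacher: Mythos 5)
Your proposal is correct and matches the paper's (implicit) argument: the corollary is obtained exactly by combining Krein's trace identity \eqref{eq: Krein identity} (legitimate here since $g'$ has compact support, $g(H_1^\Lambda)-g(H_0^\Lambda)$ is trace class, and $e^{-\lambda}$ is admissible by Lemma \ref{Lemma SSF}) with Theorem \ref{thm:SSF}(ii) applied to $f=g'$, and then factoring out $32^d$. No further comment is needed.
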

\begin{proof}[Proof of Theorem \ref{thm:SSF}]
(i) Using the invariance principle for the spectral shift function we have
\begin{align*}
\int_{- \infty}^T F_t(| \xi( \lambda, H_1^\Lambda, H_0^\Lambda)|) \mathrm{d}\lambda 
& = \int_{- \infty}^T F_t(|\xi(e^{-\lambda}, e^{-H_1^\Lambda}, e^{-H_0^\Lambda})|) \mathrm{d}\lambda\\
& \leq e^T \int_{e^{-T}}^\infty F_t(|\xi(s, e^{-H_1^\Lambda}, e^{-H_0^\Lambda})|) \mathrm{d}s.
\end{align*}
Since the difference $V_{\mathrm{eff}}^\Lambda = e^{-H_1^\Lambda} - e^{-H_0^\Lambda}$ is trace class, 
we can apply an estimate of \cite{HundertmarkS-02} and find using Theorem \ref{thm:singular-values}
\begin{align*}
\int_{- \infty}^\infty F_t(|\xi(s, e^{-H_1}, e^{-H_0})|) \mathrm{d}s 
& \leq \sum_{n = 1}^\infty \mu_n(V_{\mathrm{eff}}) (F_t(n) - F_t(x-1))\\
= \sum_{n = 1}^{N_0} \mu_n(V_{\mathrm{eff}}) \int_{n-1}^n (e^{t s^{1/d}} - 1) \mathrm{d}s 
& + \sum_{n = N_0 + 1}^ \infty \mu_n(V_{\mathrm{eff}}) \int_{n-1}^n (e^{t s^{1/d}} - 1) \mathrm{d}s \\
\leq N_0 ( e^{t N_0^{1/d}} - 1) & + (\sqrt[4]{2d}+1) \sum_{n = 1}^\infty e^{(t-1/16) n^{1/d}}.
\end{align*}
If we choose $t$ smaller than $1/16$ this will be finite. We choose $t := 1/32 $ and recall that $N_0 = 4^d$ so that we obtain
\begin{equation*}
\int_{- \infty}^\infty F_t(|\xi(s, e^{-H_1}, e^{-H_0})|) \mathrm{d}s \leq 4^d ( e^{1/8} - 1) + (\sqrt[4]{2d}+1) \sum_{n = 1}^\infty e^{-1/32 n^{1/d}}.
\end{equation*}
To estimate the second summand we use
\begin{equation*}
\sum_{n = 1}^\infty e^{-1/32 n^{1/d}} \leq \int_{0}^\infty e^{-1/32 x^{1/d}} \mathrm{d}x = d \cdot 32^d \int_0^\infty e^{-y} y^{d-1} \mathrm{d}y = d \cdot 32^d \Gamma(d) = d! \cdot 32^d
\end{equation*}
and find
\begin{align*}
\int_{- \infty}^\infty F_t(|\xi(s, e^{-H_1}, e^{-H_0})|) \mathrm{d}s \leq 4^d +  (\sqrt[4]{2d}+1) \cdot 32^d d! \leq 2 \cdot 32^d \cdot (d+1)!
\end{align*}
To see the very last inequality we have to distinguish between $d = 1$ where it can be verified directly and $d = 2,3,...$ in which case we use $\sqrt[4]{2d} + 1 \leq d + 1$.

(ii) We use Young's inequality and dualize the bound from part (i). $F_t$ is non-negative and convex with $F_t'(0) = 0$, hence its Legendre transform  $G_t$ is well-defined and satisfies
\begin{equation*}
G_t(y) := \sup_{x \geq 0} \{ xy - F_t(x) \} \leq y \left( \frac{ \log (1+y)}{t} \right)^d \mbox{ for all } y \geq0.
\end{equation*}
Young's equality yields $yx \leq F_t(x) + G_t(y)$ and with $ b := \sup \mathrm{supp}(f)$ we find
\begin{equation*}
\int f(\lambda) \xi(\lambda) \mathrm{d}\lambda \leq \int_{- \infty}^b F_t(|\xi(\lambda|) \mathrm{d}\lambda + \int G_t(|f(\lambda)|) \mathrm{d}\lambda.
\end{equation*}
Using part (i), we find that the first integral is bounded by $K_1 e^b$. For the second summand we estimate
\begin{equation*}
\int G(|f(\lambda)|) \mathrm{d}\lambda \leq \int |f(\lambda)| \left( \frac{ \log( 1 + | f(\lambda)|)}{t} \right)^d \mathrm{d}\lambda \leq t^{-d} | \log(1 + \|f\|_\infty)^d \|f\|_1.
\end{equation*}
\end{proof}

\section{Proof of the Wegner estimate}
We will occationally write $H_L(\omega)=H_{\omega, L}$ and $V_L(\omega)=V_{\omega, L}$ for notational convenience. 
Note that for all $\omega \in [\omega_{-}, \omega_{+}]^{\ZZ^d}$, 
all $L \in \NN_{\mathrm{odd}}$ and all $j \in \ZZ^d$, $H_{\omega, L}$ has purely discrete spectrum.
Denote the eigenvalues of $H_{\omega, L}$ by $\left\{ \lambda_i(\omega) \right\}_{i \in \NN}$, enumerated increasingly and counting multiplicities.

	\begin{lemma}
	Let \eqref{eq:abstractUCP} hold and assume that $\omega \in [\omega_{-}, \omega_{+}]^{\ZZ^d}$. 
	Then for all $i \in \NN$ with $\lambda_i(\omega) \in (- \infty , b-1]$ and all $\delta \leq 1/2 - \omega_{+}$ we have
	\begin{equation*}
	\lambda_i(\omega + \delta) \geq \lambda_i(\omega) + \kappa (\delta/2)^{1/M}.
	\end{equation*}
	\end{lemma}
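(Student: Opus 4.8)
The plan is to exploit the monotonicity of $H_{\omega,L}$ in the parameters $\omega_j$ together with the quantitative lower bound provided by hypothesis SFUCP, via a variational (min–max) argument.

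First I would record the monotonicity: increasing any $\omega_j$ can only enlarge the balls $B_{\omega_j}$ (or cubes $\Lambda_{2\omega_j}$), hence only increases the potential $V_{\omega,L}$ pointwise, so $\omega \mapsto H_{\omega,L}$ is nondecreasing in the form-sense, and consequently $\omega \mapsto \lambda_i(\omega)$ is nondecreasing for every $i$. This already shows $\lambda_i(\omega+\delta)\ge\lambda_i(\omega)$; the work is to make the gain quantitative. The key observation is that the \emph{difference} of potentials $V_{\omega+\delta,L}-V_{\omega,L}$ dominates a characteristic function of a union of small balls. Concretely, since $\delta\le 1/2-\omega_+$, for the ball model the annulus $B_{\omega_j+\delta}\setminus B_{\omega_j}$ contains a ball of radius $\delta/2$ centered at a point at distance $\omega_j+\delta/2$ from $j$; the collection of these centers (shifted by $j\in\ZZ^d$) is $(\delta/2)$-equidistributed in the sense of the definition, because $\omega_j+\delta/2 < \omega_+ + (1/2-\omega_+) = 1/2$ keeps the ball inside $\Lambda_1(j)$. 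An analogous elementary geometric computation handles the cube model. Hence $V_{\omega+\delta,L}-V_{\omega,L}\ge W_{\delta/2,L}$ pointwise, and therefore as quadratic forms $H_{\omega+\delta,L}\ge H_{\omega,L}+W_{\delta/2,L}$.

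Next I would run the min–max principle. Fix $i$ with $\lambda_i(\omega)\in(-\infty,b-1]\subseteq(-\infty,b]$. Let $\mathcal{V}$ be the $i$-dimensional spectral subspace of $H_{\omega,L}$ associated with its lowest $i$ eigenvalues, so that $\langle\psi,H_{\omega,L}\psi\rangle\le\lambda_i(\omega)\|\psi\|^2$ for all $\psi\in\mathcal V$, and moreover $\chi_{(-\infty,b]}(H_{\omega,L})\psi=\psi$ for such $\psi$ since $\lambda_i(\omega)\le b$. By SFUCP applied at scale $\delta/2$ (legitimate since $\delta/2<1/2-\omega_+$),
\begin{equation*}
\langle\psi, W_{\delta/2,L}\,\psi\rangle\;\ge\;\kappa\,(\delta/2)^{1/M}\,\|\psi\|^2\qquad\text{for all }\psi\in\mathcal V.
\end{equation*}
Combining,
\begin{equation*}
\langle\psi, H_{\omega+\delta,L}\,\psi\rangle\;\ge\;\langle\psi,H_{\omega,L}\psi\rangle+\langle\psi,W_{\delta/2,L}\psi\rangle\;\ge\;\bigl(\lambda_i(\omega)+\kappa(\delta/2)^{1/M}\bigr)\|\psi\|^2
\end{equation*}
for all $\psi\in\mathcal V$. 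Since $\dim\mathcal V=i$, the Courant–Fischer characterization $\lambda_i(\omega+\delta)=\min_{\dim W=i}\max_{0\ne\psi\in W}\langle\psi,H_{\omega+\delta,L}\psi\rangle/\|\psi\|^2$ — or rather the max over the \emph{specific} subspace $\mathcal V$ is a lower bound for this min-max only after passing to $\max_{\psi\in\mathcal V}$, which here is bounded below, giving $\lambda_i(\omega+\delta)\ge\lambda_i(\omega)+\kappa(\delta/2)^{1/M}$, as claimed. (One uses the min–max in the form: the $i$-th eigenvalue is $\ge c$ as soon as there is an $i$-dimensional subspace on which the form is $\ge c$.)

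The main obstacle I anticipate is the geometric Step: verifying cleanly, for both the ball and the cube single-site potentials, that $V_{\omega+\delta,L}-V_{\omega,L}$ dominates $W_{\delta/2,L}$ for a suitable $(\delta/2)$-equidistributed sequence, with the radius $\delta/2$ (rather than some other constant times $\delta$) and with the containment $B_{\delta/2}(x_j)\subset\Lambda_1(j)$ genuinely holding under the hypothesis $\delta\le 1/2-\omega_+$ — this is where the precise constant $\delta/2$ in the statement, and hence the factor $(1/2-\omega_+)/2$ in $\epsilon_{\max}$, comes from, and it requires a short but careful inclusion-of-balls (resp. nested-cubes) argument. Everything else is the standard monotonicity-plus-min–max mechanism. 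The only subtlety in the variational part is being careful that SFUCP is applicable: it requires the \emph{test vectors to lie in the range of} $\chi_{(-\infty,b]}(H_{\omega,L})$, which is why I restrict to the spectral subspace $\mathcal V$ rather than an arbitrary $i$-dimensional subspace, and why the hypothesis $\lambda_i(\omega)\le b-1\le b$ is used.
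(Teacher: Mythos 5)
Your geometric step (the annuli/nested-cubes argument giving $V_{\omega+\delta,L}-V_{\omega,L}\ge W_{\delta/2,L}$ with a $(\delta/2)$-equidistributed family of centers) is exactly the paper's, and is fine. The genuine gap is in the variational step. You test on $\mathcal V$, the span of the lowest $i$ eigenfunctions of the \emph{unperturbed} operator $H_{\omega,L}$. On that subspace you only know the \emph{upper} bound $\langle\psi,H_{\omega,L}\psi\rangle\le\lambda_i(\omega)\|\psi\|^2$, yet your chained inequality uses it as a lower bound to get $\langle\psi,H_{\omega+\delta,L}\psi\rangle\ge(\lambda_i(\omega)+\kappa(\delta/2)^{1/M})\|\psi\|^2$; for $\psi$ near the ground state of $H_{\omega,L}$ this is simply not available. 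Moreover, the min--max principle you invoke in the parenthetical -- ``$\lambda_i\ge c$ as soon as there is an $i$-dimensional subspace on which the form is $\ge c$'' -- is false: for $\operatorname{diag}(0,1,10,10)$ and $i=2$ the form is $\ge 10$ on $\operatorname{span}\{e_3,e_4\}$ while $\lambda_2=1$. A lower bound on $\lambda_i$ via Courant--Fischer needs control on a subspace of codimension $i-1$, or on the specific minimizing subspace; the max of the form over a hand-picked $i$-dimensional subspace only bounds $\lambda_i(\omega+\delta)$ from \emph{above}, so your conclusion does not follow.

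The paper's proof avoids this by working with the eigenfunctions $\phi_1,\dots,\phi_i$ of the \emph{perturbed} operator $H_{\omega+\delta,L}$: then $\lambda_i(\omega+\delta)=\max_{\phi\in\operatorname{Span}\{\phi_1,\dots,\phi_i\},\|\phi\|=1}\langle\phi,H_{\omega+\delta,L}\phi\rangle$ exactly, one applies \eqref{eq:abstractUCP} to $H_{\omega+\delta,L}$ (every such $\phi$ lies in the range of $\chi_{(-\infty,b]}(H_{\omega+\delta,L})$), and finally $\max_\phi\langle\phi,H_{\omega,L}\phi\rangle\ge\lambda_i(\omega)$ because the span is \emph{some} $i$-dimensional subspace and $\lambda_i(\omega)$ is an infimum over all of them -- this direction of min--max is legitimate. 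Note also that this is precisely where the hypothesis $\lambda_i(\omega)\le b-1$ (rather than $\le b$) is used: since $0\le V_{\omega+\delta,L}-V_{\omega,L}\le1$ one gets $\lambda_i(\omega+\delta)\le b$, which is needed to apply SFUCP to the perturbed operator. Your remark that you only need $\lambda_i(\omega)\le b$ is a symptom of the mix-up: with the spectral subspace of $H_{\omega,L}$ the SFUCP hypothesis is easy to satisfy, but the eigenvalue-lifting conclusion cannot be extracted.
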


\begin{proof}
		Let $H_{\omega+\delta,L} \phi_i = \lambda_i(\omega+\delta) \phi_i$ for all $i \in \NN$. 
		Then
		\begin{equation*}
		\lambda_i(\omega + \delta) 
		 = \left\langle \phi_i, H_{\omega+\delta,L}\phi_i \right\rangle .
		\end{equation*}
		We write $H_{\omega+\delta, L} = H_{\omega, L} + \left( V_{\omega + \delta,L} - V_{\omega,L} \right)$.  
Note that $\omega_{+} + \delta \leq 1/2 $ and 
\begin{equation}
\label{eq:lower-bound-V}
V_{\omega + \delta,L} (x) - V_{\omega,L} (x) = \sum_{j} \chi_{B{\omega_j + \delta}}(x-j) - \chi_{B\omega_j}(x-j)  
\end{equation}
In the case \eqref{Breather model balls} where the single-site potential is the characteristic function of a ball, this is a 
sum of characteristic functions of mutually disjoint annuli of width $\delta$, cf.~Fig.~1. 
\begin{figure}
  \begin{tikzpicture}
    \draw (-0.1, -0.1) grid (5 + 0.1, 5 + 0.1);
    \foreach \x in {0,...,4}{
	    \foreach \y in {0, ..., 4}{
		    \pgfmathsetmacro{\w}{0.15*(rand+1)};
			    \filldraw[thick, pattern=north east lines,  even odd rule] (\x + 0.5, \y + 0.5 ) circle (\w + 0.2) -- (\x + 0.5, \y + 0.5) circle (\w);	
    }}		
  \end{tikzpicture}
  \caption{Support of $V_{\omega + \delta} - V_{\omega}$ in the case \eqref{Breather model balls} of characteristic functions of balls as single-site potentials.}
\end{figure}
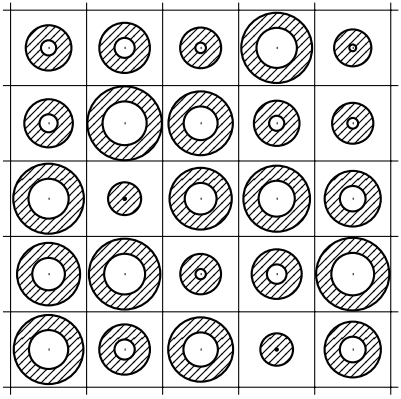
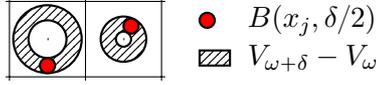
\begin{figure}
 \begin{tikzpicture}[scale = 1]
	\begin{scope}[xshift=2.5cm,yshift=-1.5cm] 

	    \filldraw[thick, fill = red] (0.1, 2.25) circle (0.1);
	    \draw (0.5, 2.25) node[right] {$B(x_j, \delta/2)$};

	    \filldraw[thick, pattern=north east lines] (0,1.65) rectangle (0.4,1.85);
	    \draw (0.5, 1.75) node[right] {$V_{\omega + \delta} - V_{\omega}$};
	\end{scope}
  \draw (-0.05, -0.05) grid (2.05, 1.05);
  \filldraw[thick, pattern=north east lines,  even odd rule] (0.5, 0.5) circle (0.45) -- (0.5, 0.5) circle (0.25);
  \filldraw[thick, pattern=north east lines,  even odd rule] (1.5, 0.5) circle (0.3) -- (1.5, 0.5) circle (0.1);
  \filldraw[thick, fill = red] (0.5, 0.15) circle (0.1);
  \filldraw[thick, fill = red] (1.60, 0.67320508) circle (0.1);
  \end{tikzpicture}
  \caption{Illustration of the increments $V_{\omega + \delta} - V_\omega$ and the choice of the balls $B(x_j, \delta/2)$ in case \eqref{Breather model balls}.}
\end{figure}
Each of these annuli contains a ball of radius $\delta/2$, cf.~Fig.~2, and we find
\begin{equation*}
V_{\omega + \delta,L} (x) - V_{\omega,L} (x) \geq W_{\delta/2,L}(x)
\end{equation*}
where $W_{\delta/2,L}$ is the characteristic function of a union of balls of radius $\delta/2$ each of which is contained in a different elementary cell of the grid $\ZZ^d$.
In the case \eqref{Breather model cubes} where the single-site potential is the characteristic function of a cube, an analogous argument holds.
\\
Restrict now to $i$ such that $\lambda_i^L(\omega) \in(- \infty , b-1]$ which implies in particular $\lambda_i^L(\omega + \delta) \in(- \infty , b]$. 
		Then estimate \eqref{eq:abstractUCP} holds and we obtain via the variational characterization of eigenvalues
		\begin{align*}
		\lambda_i(\omega + \delta) 
		& = \left\langle \phi_i, H_{\omega+\delta,L}\phi_i \right\rangle\\
		& = \max_{\phi \in \mathrm{Span}\{\phi_1, ..., \phi_i\}, \| \phi \| = 1} \left\langle \phi, H_{\omega,L}\phi \right\rangle + \left\langle \phi , \left( V_ {\omega + \delta,L} - V_{\omega,L} \right) \phi \right\rangle\\
		& \geq \max_{\phi \in \mathrm{Span}\{\phi_1, ..., \phi_i\}, \| \phi \| = 1} \left\langle \phi, H_{\omega,L}\phi \right\rangle + \left\langle \phi , W_{\delta/2,L} \phi \right\rangle\\
		& \geq \max_{\phi \in \mathrm{Span}\{\phi_1, ..., \phi_i\}, \| \phi \| = 1} \left\langle \phi, H_{\omega,L}\phi \right\rangle + \kappa (\delta/2)^{1/M}\\
		& \geq \inf_{\mathrm{dim} \mathcal{D} = i} \max_{\phi \in \mathcal{D}, \| \phi \| = 1} \left\langle \phi, H_{\omega,L}\phi \right\rangle + \kappa (\delta/2)^{1/M}\\
		& = \lambda_i(\omega) + \kappa (\delta/2)^{1/M}.
		\end{align*}
\end{proof}
We choose $\delta := 2 \left( \frac{ 4 \epsilon}{\kappa} \right)^M$, so that the lemma becomes
\begin{equation*}
\lambda_i(\omega + \delta) \geq \lambda_i(\omega) + 4 \epsilon.
\end{equation*}
Since we required $\delta \leq 1/2 - \omega_{+}$, this yields the upper bound on $\epsilon$ from the theorem. Note in particular that this bound is smaller than $1/2$.
Now we follow the strategy used in \cite{HundertmarkKNSV-06}.
Let $\rho \in C^\infty(\mathbb{R},[-1,0])$ be a smooth, non-decreasing function such that $\rho = -1$ on $(-\infty; -\epsilon]$ and $\rho = 0$ on $[\epsilon; \infty)$. We can assume $\| \rho' \|_\infty \leq 1/\epsilon$. It holds that
\begin{equation*}
\chi_{[E-\epsilon; E + \epsilon]} (x) \leq \rho(x-E + 2\epsilon) - \rho(x-E-2\epsilon)
\end{equation*}
for all $x \in \mathbb{R}$ which translates into
\begin{align*}
& \mathbb{E} \left[ \mathrm{Tr} \left[ \chi_{[E-\epsilon; E+\epsilon]} ( H_{\omega,L}) \right] \right]\\
\leq \ & \mathbb{E} \left[ \mathrm{Tr} \left[ \rho(H_{\omega,L} - E + 2 \epsilon) - \rho(H_{\omega, L} - E - 2\epsilon) \right] \right]\\
= \ & \mathbb{E} \left[ \mathrm{Tr} \left[ \rho(H_{\omega,L} - E - 2 \epsilon + 4 \epsilon) - \rho(H_{\omega,L} - E - 2\epsilon) \right] \right].
\end{align*}
\begin{lemma}
\begin{equation}\label{trace rho}
\mathrm{Tr} \left[ \rho \left( H_{\omega, L} - E - 2 \epsilon + 4 \epsilon \right) \right] \leq 
\mathrm{Tr} \left[ \rho \left( H_{\omega + \delta, L} - E - 2 \epsilon \right) \right].
\end{equation}
\end{lemma}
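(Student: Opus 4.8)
The plan is to deduce \eqref{trace rho} from the eigenvalue monotonicity established in the previous lemma together with the fact that $\rho$ is non-decreasing. First I would recall that both $H_{\omega,L}$ and $H_{\omega+\delta,L}$ have purely discrete spectrum, with eigenvalues $\{\lambda_i(\omega)\}_{i\in\NN}$ and $\{\lambda_i(\omega+\delta)\}_{i\in\NN}$ enumerated increasingly and counted with multiplicity. Since $\rho$ is a bounded Borel function supported (as regards its non-zero part) near $0$, the spectral theorem gives
\begin{equation*}
\Tr\bigl[\rho(H_{\omega,L}-E-2\epsilon+4\epsilon)\bigr] = \sum_{i\in\NN} \rho\bigl(\lambda_i(\omega)-E+2\epsilon\bigr),
\end{equation*}
and likewise for $H_{\omega+\delta,L}$ with $\lambda_i(\omega+\delta)-E-2\epsilon$ as argument; both sums are finite because $\rho$ vanishes on $[\epsilon,\infty)$ and there are only finitely many eigenvalues below any given threshold.

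Next, I would compare the two sums term by term. The key point is the inequality
\begin{equation*}
\rho\bigl(\lambda_i(\omega)-E+2\epsilon\bigr) \leq \rho\bigl(\lambda_i(\omega+\delta)-E-2\epsilon\bigr)
\end{equation*}
for every $i\in\NN$. To see this, split into two cases according to whether $\lambda_i(\omega)$ lies in $(-\infty,b-1]$ or not. If $\lambda_i(\omega)\le b-1$, then by the previous lemma with our choice $\delta=2(4\epsilon/\kappa)^M$ we have $\lambda_i(\omega+\delta)\ge \lambda_i(\omega)+4\epsilon$, so $\lambda_i(\omega+\delta)-E-2\epsilon \ge \lambda_i(\omega)-E+2\epsilon$, and monotonicity of $\rho$ gives the claim. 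If $\lambda_i(\omega)>b-1$, then since $[E-\epsilon,E+\epsilon]\subseteq(-\infty,b-1]$ and $\epsilon<1/2$ we have $\lambda_i(\omega)-E+2\epsilon > b-1-E+2\epsilon > \epsilon$ (using $E\le b-1-\epsilon$), so $\rho$ evaluates to $0$ on the left-hand side; since $\rho\le 0$ everywhere, the right-hand side is $\ge 0 =$ left-hand side, so the inequality holds trivially. (In the borderline regime one should also note $\lambda_i(\omega+\delta)\ge\lambda_i(\omega)$ by the non-negativity of $V_{\omega+\delta,L}-V_{\omega,L}$, so that no eigenvalue that was above $b-1$ can drop; this guarantees the case split is exhaustive and consistent.) Summing the term-by-term inequality over $i$ yields \eqref{trace rho}.

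The main obstacle — really the only subtlety — is bookkeeping the comparison of spectra: one must be sure the correspondence $\lambda_i(\omega)\leftrightarrow\lambda_i(\omega+\delta)$ via the common index $i$ is exactly the right pairing (which it is, because both are increasingly ordered and the monotonicity lemma was phrased index-by-index), and one must handle the indices with $\lambda_i(\omega)>b-1$, where the lemma does not apply, by exploiting that $\rho$ has already saturated to $0$ there. Both points are routine once one writes the traces as sums over eigenvalues, so no further estimates are needed.
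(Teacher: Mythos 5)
Your overall strategy---expanding both traces in the eigenvalues $\lambda_i(\omega)$, $\lambda_i(\omega+\delta)$ and comparing term by term via the eigenvalue-shift lemma and the monotonicity of $\rho$---is exactly the paper's proof. The genuine problem is your treatment of the indices with $\lambda_i(\omega)>b-1$, where the previous lemma does not apply. There you correctly get that the left-hand term is $0$, but then assert that ``since $\rho\le 0$ everywhere, the right-hand side is $\ge 0$''; this is backwards: $\rho\le 0$ forces the right-hand term $\rho(\lambda_i(\omega+\delta)-E-2\epsilon)$ to be $\le 0$, so what you actually need is that it \emph{equals} $0$, i.e.\ that $\lambda_i(\omega+\delta)-E-2\epsilon$ lies in the region where $\rho$ vanishes. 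Your parenthetical fallback, $\lambda_i(\omega+\delta)\ge\lambda_i(\omega)$, only yields $\lambda_i(\omega+\delta)-E-2\epsilon>(b-1)-E-2\epsilon\ge-\epsilon$, and on $(-\epsilon,\epsilon)$ the function $\rho$ may well be strictly negative. Concretely, for $E\in(b-1-3\epsilon,\,b-1-\epsilon]$ an eigenvalue $\lambda_i(\omega)$ just above $b-1$ whose lift under $\omega\mapsto\omega+\delta$ were tiny would make the right-hand term strictly negative while the left-hand term is $0$; so the term-by-term inequality for these indices cannot be justified by non-negativity of $V_{\omega+\delta,L}-V_{\omega,L}$ alone.

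The gap can be closed within the spirit of the paper by a dichotomy on $\lambda_i(\omega+\delta)$ rather than on $\lambda_i(\omega)$: if $\lambda_i(\omega+\delta)>b$, then $\lambda_i(\omega+\delta)-E-2\epsilon>b-E-2\epsilon\ge 1-\epsilon\ge\epsilon$ (using $E\le b-1-\epsilon$ and $\epsilon\le\epsilon_{\max}<1/2$), so $\rho$ vanishes there; if $\lambda_i(\omega+\delta)\le b$, then the proof of the previous lemma applies verbatim---its only essential input is that $\phi_1,\dots,\phi_i$ lie in the range of $\chi_{(-\infty,b]}(H_{\omega+\delta,L})$, not that $\lambda_i(\omega)\le b-1$---and gives $\lambda_i(\omega+\delta)\ge\lambda_i(\omega)+4\epsilon>E+5\epsilon$, so again $\rho$ vanishes. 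In either case the right-hand term is $0$ and the term-by-term comparison holds for all $i$, after which summing gives the lemma. For fairness: the paper's own proof is silent about the indices with $\lambda_i(\omega)>b-1$, so your instinct to treat them separately was the right one; it is the justification you give for them that fails.
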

	\begin{proof}
		$\rho$ is a monotonous function, hence we have by the previous lemma
		\begin{equation*}
		\rho(\lambda_i(\omega) - E - 2 \epsilon + 4 \epsilon) \leq \rho(\lambda_i(\omega + \delta)- E - 2 \epsilon  ).
		\end{equation*}
		We expand the trace in eigenvalues
		\begin{align*}
		&\mathrm{Tr} \left[ \rho(H_{\omega,L} - E - 2 \epsilon + 4 \epsilon) \right] 
		 = \sum_k \rho(\lambda_i(\omega) - E - 2 \epsilon + 4 \epsilon)  \\
		 \leq &\sum_i \rho(\lambda_i(\omega + \delta) - E - 2 \epsilon)
		 = \mathrm{Tr} \left[ \rho(H_{\omega + \delta, L} - E - 2 \epsilon) \right].
		\end{align*}
	\end{proof}
Now let $\tilde{\Lambda_L} := \Lambda_L \cap \ZZ^d$ and $N := |\tilde{\Lambda}|$. The indices which affect the potential in $\Lambda_L$ will be enumerated by 
\begin{equation*}
k : \left\{ 1, ... N \right\} \rightarrow \tilde{\Lambda_L}, \quad n \mapsto k(n).
\end{equation*}
We define functions which describe how the upper bound in \eqref{trace rho} varies when we change one random variable $\omega_{k(n)}$ while keeping all the other random variables fixed. In order to do that we need some notation. Given 
$\omega \in [\omega_{-}, \omega_{+}]^{\ZZ^d}$, 
$n \in \left\{ 1, ..., N \right\}$, 
$\delta \in [0, 1/2 - \omega_{+}]$ and 
$t \in [\omega_{-},  \omega_{+}]$, 
we define $\tilde{\omega}^{(n, \delta)}(t) \in [\omega_{-}, 1/2]^{\ZZ^d}$ inductively via 
\begin{align*}
\left( \tilde{\omega}^{(1, \delta)}(t) \right)_j &:=
	\begin{cases}
		 t & \mbox{ if } j = k(1)\\
		\omega_j & \mbox{else}
	\end{cases}\\
\left( \tilde{\omega}^{(n, \delta)}(t) \right)_j &:= 
	\begin{cases}
		 t & \mbox{ if } j = k(n)\\
		\left( \tilde{\omega}^{(n-1, \delta)}(\omega_j + \delta) \right)_j & \mbox{else}.
	\end{cases}
\end{align*}
The function 
$\tilde{\omega}^{(n,\delta)} : [\omega_{-},  1/2] \rightarrow [\omega_{-},  1/2]^{\ZZ^d}$ 
is the rank-one perturbation of $\omega$ in the $k(n)$-th coordinate with the additional requirement that all sites $k(1), ..., k(n-1)$ have already been blown up by $\delta$.

We define
\begin{align*}
\Theta_n(t) &:= \mathrm{Tr} \left[ \rho \left( H_L \left({\tilde{\omega}^{(n, \delta)}(t)}\right) - E - 2 \epsilon \right) \right], 
\mbox{ for } n = 1, ..., N.
\end{align*}
Note that 
\begin{equation*}
\Theta_n(\omega_{k(n)}) = \Theta_{n-1}(\omega_{k(n-1)} + \delta) \mbox{ for } n = 2, ..., N
\end{equation*}
and
\begin{align*}
\Theta_N(\omega_{k(N)} + \delta) & = \mathrm{Tr} \left[ \rho \left( H_{\omega + \delta, L} - E - 2 \epsilon \right) \right] ,\\
\Theta_1(\omega_{k(1)}) & = \mathrm{Tr} \left[ \rho \left( H_{\omega, L} - E - 2 \epsilon \right) \right].
\end{align*}
Hence we can expand the expectation of the upper bound in \eqref{trace rho} in a telescopic sum 
\begin{align*}
& \EE \left[ \mathrm{Tr} \left[ \rho(H_{\omega + \delta,L} - E - 2 \epsilon ) \right]- \mathrm{Tr} \left[ \rho(H_{\omega,L} - E - 2 \epsilon  \right]  \right]\\
= & \EE \left[ \Theta_N(\omega_{k(N)} + \delta) - \Theta_1(\omega_{k(1)}) \right]\\ 
= & \sum_{n = 1}^N \EE \left[ \Theta_n(\omega_{k(n)} + \delta) - \Theta_n(\omega_{k(n)}) \right].
\end{align*}
Since we have a product measure structure, we  can apply Fubini's Theorem 
\begin{align*}
\EE \left[ \Theta_n(\omega_{k(n)} + \delta) - \Theta_n(\omega_{k(n)}) \right]
= \EE \left[ \int_{\omega_{-}}^{\omega_{+}} \Theta_n(\omega_{k(n)} + \delta) -\Theta_n(\omega_{k(n)})  \mathrm{d}\mu(\omega_{k(i)})\right].
\end{align*}
Note that for $t \in [\omega_{-},1/2]$, $\Theta_n$ is non-decreasing and bounded.
In fact, monotonicity follows from the inequality 
\begin{equation*}
V_L \left({\tilde{\omega}^{(n, \delta)}(t_1)} \right) \leq V_L \left( {\tilde{\omega}^{(n, \delta)}(t_2)} \right),
\end{equation*}
whenever $t_1 \leq t_2$ and boundedness is due to the fact that $0$ and $1$ provide upper and lower bounds
\begin{equation*}
0 \leq V_L \left({\tilde{\omega}^{(n, \delta)}(t)}\right) \leq 1.
\end{equation*}
	\begin{lemma}
	Let $- \infty < \omega_{-} < \omega_{+} \leq + \infty$.
	Assume that $\mu$ is a probability distribution with bounded density $\nu_\mu$ and support in the interval $[\omega_{-},\omega_{+}]$ or $[\omega_{-}, \omega_{+})$ if $\omega_{+} = \infty$ and let $\Theta$ be a non-decreasing, bounded function. 
	Then for every $\delta > 0$
	\begin{equation*}
	\int_\RR \left[ \Theta(\lambda + \delta) - \Theta(\lambda) \right] \mathrm{d} \mu(\lambda)  \leq \| \nu_\mu \|_\infty \cdot \delta \left[ \Theta(\omega_{+} + \delta) - \Theta(\omega_{-}) \right].
	\end{equation*}
	\end{lemma}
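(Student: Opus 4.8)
The plan is entirely elementary: pass from the $\mu$-integral to a Lebesgue integral using the density, then move the shift by $\delta$ onto the integration variable by a change of variables, so that monotonicity of $\Theta$ does the rest.

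First I would use that $\mathrm{d}\mu = \nu_\mu\,\mathrm{d}\lambda$ with $\mathrm{supp}\,\nu_\mu \subseteq [\omega_-,\omega_+]$ and that $\Theta(\lambda+\delta)-\Theta(\lambda)\ge 0$ (since $\Theta$ is non-decreasing) to write
\[
\int_\RR [\Theta(\lambda+\delta)-\Theta(\lambda)]\,\mathrm{d}\mu(\lambda)
= \int_{\omega_-}^{\omega_+} [\Theta(\lambda+\delta)-\Theta(\lambda)]\,\nu_\mu(\lambda)\,\mathrm{d}\lambda
\le \|\nu_\mu\|_\infty \int_{\omega_-}^{\omega_+} [\Theta(\lambda+\delta)-\Theta(\lambda)]\,\mathrm{d}\lambda ,
\]
the last integral being finite because $\Theta$ is bounded.

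Next I would estimate the remaining integral. Setting $F(x):=\int_{\omega_-}^x \Theta(s)\,\mathrm{d}s$ (absolutely continuous, with $\int_a^b\Theta = F(b)-F(a)$ for all $a\le b$), the substitution $u=\lambda+\delta$ gives $\int_{\omega_-}^{\omega_+}\Theta(\lambda+\delta)\,\mathrm{d}\lambda = F(\omega_++\delta)-F(\omega_-+\delta)$, hence
\[
\int_{\omega_-}^{\omega_+} [\Theta(\lambda+\delta)-\Theta(\lambda)]\,\mathrm{d}\lambda
= \bigl[F(\omega_++\delta)-F(\omega_+)\bigr] - \bigl[F(\omega_-+\delta)-F(\omega_-)\bigr]
= \int_{\omega_+}^{\omega_++\delta}\Theta(s)\,\mathrm{d}s - \int_{\omega_-}^{\omega_-+\delta}\Theta(s)\,\mathrm{d}s .
\]
This telescoping identity holds for every $\delta>0$, whether or not the two shifted intervals overlap --- which is exactly the point where one should argue through $F$ rather than by cancelling a common sub-interval. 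Bounding, by monotonicity, the first integrand from above by $\Theta(\omega_++\delta)$ and the second from below by $\Theta(\omega_-)$ yields
\[
\int_{\omega_-}^{\omega_+} [\Theta(\lambda+\delta)-\Theta(\lambda)]\,\mathrm{d}\lambda \le \delta\,\Theta(\omega_++\delta) - \delta\,\Theta(\omega_-) = \delta\,[\Theta(\omega_++\delta)-\Theta(\omega_-)] ,
\]
and combining the two displays gives the assertion.

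For the degenerate case $\omega_+=+\infty$ (support $[\omega_-,\infty)$) I would run the same computation on $[\omega_-,T]$ and let $T\to\infty$, reading $\Theta(\omega_++\delta)$ as $\lim_{x\to\infty}\Theta(x)=\sup\Theta$, which exists because $\Theta$ is bounded and non-decreasing. There is no genuine obstacle here; the only point deserving a line of care is the validity of the telescoping step for $\delta$ possibly larger than $\omega_+-\omega_-$, which the passage through $F$ settles.
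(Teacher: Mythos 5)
Your argument is correct and follows essentially the same route as the paper: bound the density by $\|\nu_\mu\|_\infty$, shift the integration variable by $\delta$, telescope to the two boundary integrals over $[\omega_+,\omega_++\delta]$ and $[\omega_-,\omega_-+\delta]$, and bound these by monotonicity. Your detour through the antiderivative $F$ and the remark about $\omega_+=\infty$ only make explicit what the paper's computation does implicitly; no substantive difference.
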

		\begin{proof}
		We calculate
		\begin{align*}
		& \int_\RR \left[ \Theta(\lambda + \delta) - \Theta (\lambda) \right] \mathrm{d} \mu ( \lambda )\\
		 \leq & \| \nu_\mu \|_\infty \int_{\omega_{-}}^{\omega_{+}} \left[ \Theta(\lambda + \delta) - \Theta (\lambda) \right] \mathrm{d} \lambda 
		=   \| \nu_\mu \|_\infty \left[ \int_{\omega_{-} + \delta}^{\omega_{+}+\delta} \Theta(\lambda) \mathrm{d} \lambda - \int_{\omega_{-}}^{\omega_{+}} \Theta(\lambda) \mathrm{d} \lambda \right]\\
		=&  \| \nu_\mu \|_\infty \left[ \int_{\omega_{+}}^{\omega_{+}+\delta} \Theta(\lambda) \mathrm{d} \lambda - \int_{\omega_{-}}^{\omega_{-}+\delta} \Theta(\lambda) \mathrm{d} \lambda \right] 
		\leq   \| \nu_\mu \|_\infty \cdot \delta \left[ \Theta(\omega_{+}+\delta) - \Theta(\omega_{-}) \right].
		\end{align*}
		\end{proof}
Thus, we find
\begin{equation*}
\int_{\omega_{-}}^{\omega_{+}} \left[ \Theta_i(\omega_{k(i)} +  \delta) - \Theta_i(\omega_{k(i)})  \mathrm{d}\mu(\omega_{k(i)})\right] \leq \| \nu_\mu \|_{\infty} \cdot \delta \left[ \Theta_i(\omega_{+}+  \delta) - \Theta_i(\omega_{-}) \right] 
\end{equation*}
\par
We will use the results from the previous section in the following form:
\begin{proposition}\label{Krein}
	Let $H_0 := H_A + V_0$ be a Schr\"o\-ding\-er operator with a bounded potential $V_0\geq 0$, and  
	let $H_1 := H_0 + V$ for some bounded $V \geq 0$ with support in a cube of side length $1$.	
	Denote the Dirichlet restrictions to $\Lambda = \Lambda_L$ by $H_0^\Lambda$ and $H_1^\Lambda$, respectively.
	There are constants $K_1$, $K_2$ depending only on $d$ 
	such that for any smooth function $g: \RR \rightarrow \RR$ with derivative supported in a compact subset of $(-\infty,b]$
	and the property that $g(H_1) - g(H_0)$ is trace class
	\begin{equation*}
	\mathrm{Tr} \left[ g(H_1) - g(H_0) \right] 
\leq K_1 e^{b} + K_2 \left( \ln(1+\| g^\prime \|_\infty )^d \right) \| g^\prime \|_1
	\end{equation*}
	A possible choice is $K_1 := 2 \cdot 32^d \cdot (d+1)!$ and $K_2 := 32^d$.
	\end{proposition}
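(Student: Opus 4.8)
The plan is to read this Proposition off directly from the machinery assembled in Section~\ref{s:SSF}; it is essentially the Corollary stated there after Theorem~\ref{thm:SSF}, merely restated with the hypotheses on $H_0$ and $H_1$ made explicit, so no new ideas are needed. Concretely, I would combine Krein's trace identity \eqref{eq: Krein identity} with the dualized bound on the spectral shift function, Theorem~\ref{thm:SSF}(ii).

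First I would check that the spectral shift function $\xi(\cdot,H_1^\Lambda,H_0^\Lambda)$ is well defined for the pair at hand. By Theorem~\ref{thm:singular-values} the singular values of $V_{\mathrm{eff}}^\Lambda=e^{-H_1^\Lambda}-e^{-H_0^\Lambda}$ decay like $\exp(-n^{1/d}/16)$, hence are summable and $V_{\mathrm{eff}}^\Lambda$ is trace class; since $H_0^\Lambda$ and $H_1^\Lambda$ are positive definite (Lemma~\ref{Lemma SSF}) and $g(\lambda)=\exp(-\lambda)$ is admissible, the invariance principle relates $\xi(\cdot,H_1^\Lambda,H_0^\Lambda)$ to the spectral shift function of the trace class pair $e^{-H_1^\Lambda},e^{-H_0^\Lambda}$, so $\xi(\cdot,H_1^\Lambda,H_0^\Lambda)$ exists and is exactly the function appearing in Theorem~\ref{thm:SSF}. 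Next, since $g\in C^\infty(\RR)$ has compactly supported derivative and $g(H_1^\Lambda)-g(H_0^\Lambda)$ is trace class by assumption, Krein's identity \eqref{eq: Krein identity} applies, and writing $\mathrm{d}g(\lambda)=g'(\lambda)\,\mathrm{d}\lambda$ it yields
\[
\Tr\left[g(H_1^\Lambda)-g(H_0^\Lambda)\right]=\int g'(\lambda)\,\xi(\lambda)\,\mathrm{d}\lambda .
\]

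Finally I would apply Theorem~\ref{thm:SSF}(ii) with $f:=g'$. This is legitimate: $g'$ is bounded, being smooth with compact support, and by hypothesis its support is a compact subset of $(-\infty,b]$, which is precisely the class of test functions allowed there. The theorem then bounds the right-hand side by $K_1 e^b+K_2(\log(1+\|g'\|_\infty))^d\|g'\|_1$ with $K_1=2\cdot 32^d(d+1)!$ and $K_2=32^d$, and combining this with the previous display proves the Proposition.

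I do not expect a genuine obstacle, since all the analytic work — the singular value estimate, its conversion into the $F_t$-integral bound, and the Legendre transform dualization — has already been carried out. The only two points deserving a word of care are that Theorem~\ref{thm:SSF}(ii) is here applied to a possibly sign-changing $f=g'$, which causes no difficulty because its proof uses only $f(\lambda)\xi(\lambda)\le|f(\lambda)|\,|\xi(\lambda)|$ before invoking Young's inequality, and the bookkeeping of the constants, which is immediate as they are quoted verbatim from Theorem~\ref{thm:SSF}(ii).
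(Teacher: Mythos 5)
Your proof is correct and follows the same route as the paper: Proposition~\ref{Krein} is just a restatement of the Corollary following Theorem~\ref{thm:SSF}, and both are obtained by combining Krein's trace identity \eqref{eq: Krein identity} (justified via the trace-class property of $V_{\mathrm{eff}}^\Lambda$ from Theorem~\ref{thm:singular-values}, positive definiteness from Lemma~\ref{Lemma SSF}, and the invariance principle with $g(\lambda)=e^{-\lambda}$) with Theorem~\ref{thm:SSF}(ii) applied to $f=g'$. Your extra remark on the possibly sign-changing $f=g'$ is a valid and welcome clarification that the paper leaves implicit.
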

The expression $\mathrm{Tr} \left[ g(H_1^\Lambda) - g(H_0^\Lambda) \right]$ is well-defined since $H_0^\Lambda$ and $H_1^\Lambda$ are both lower semibounded operators with purely discrete spectrum and only the finite set of eigenvalues in $\mathrm{supp}\ g^\prime$ can contribute to the trace.
\par
Proposition~\ref{Krein} implies
	\begin{lemma}\label{Corollary SSF}
	Let $0 < \epsilon \leq \frac{1}{2}$.
	There is a constant $\tilde{C}$ depending only on $d$ and on $b$ such that 
	\begin{equation*}
	\Theta_n(\omega_{+} + \delta) - \Theta_n(\omega_{-}) \leq \tilde{C} | \ln \epsilon |^d.
	\end{equation*}
	The constant $\tilde{C}$ can be chosen equal to $K_1 e^{b} + 2^d K_2$ with $K_1, K_2$ as in Proposition \ref{Krein}.
	\end{lemma}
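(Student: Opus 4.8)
The plan is to recognize the increment $\Theta_n(\omega_{+}+\delta)-\Theta_n(\omega_{-})$ as a spectral shift trace for a pair of operators differing by a single compactly supported bump, and then quote Proposition~\ref{Krein}. First I would unwind the definitions: by construction $\tilde\omega^{(n,\delta)}(t)$ depends on $t$ only through its $k(n)$-th coordinate, which equals $t$; hence the configurations $\tilde\omega^{(n,\delta)}(\omega_{-})$ and $\tilde\omega^{(n,\delta)}(\omega_{+}+\delta)$ agree in every coordinate except $k(n)$. Writing $H_0^\Lambda:=H_L(\tilde\omega^{(n,\delta)}(\omega_{-}))$ and $H_1^\Lambda:=H_L(\tilde\omega^{(n,\delta)}(\omega_{+}+\delta))$, this gives $H_1^\Lambda=H_0^\Lambda+V$ with $V=\chi_{B_{\omega_{+}+\delta}}(\cdot-k(n))-\chi_{B_{\omega_{-}}}(\cdot-k(n))$ in case~\eqref{Breather model balls} (and the analogous difference of characteristic functions of cubes in case~\eqref{Breather model cubes}). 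Since $\delta\le 1/2-\omega_{+}$ forces $\omega_{+}+\delta\le 1/2$, the set $B_{\omega_{+}+\delta}(k(n))$ is contained in the unit cube $\Lambda_1(k(n))$; hence $V\ge 0$, $\|V\|_\infty\le 1$, and $V$ is supported in a cube of side length $1$. Setting $g(\lambda):=\rho(\lambda-E-2\epsilon)$, the definition of $\Theta_n$ then reads $\Theta_n(\omega_{+}+\delta)-\Theta_n(\omega_{-})=\Tr\bigl[g(H_1^\Lambda)-g(H_0^\Lambda)\bigr]$.

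Next I would check the hypotheses of Proposition~\ref{Krein}. The function $g$ is smooth, and because $\rho$ is constant off $[-\epsilon,\epsilon]$ its derivative is supported in $[E+\epsilon,E+3\epsilon]$. In the situation where $\Theta_n$ is used one has $[E-\epsilon,E+\epsilon]\subseteq(-\infty,b-1]$, so $E+\epsilon\le b-1$ and, using $\epsilon\le 1/2$, $E+3\epsilon\le b$; thus $g'$ has support in the compact set $[E+\epsilon,b]\subset(-\infty,b]$. Moreover $g$ vanishes on $[E+3\epsilon,\infty)$, and $H_0^\Lambda,H_1^\Lambda$ are bounded below with purely discrete spectrum, so each has only finitely many eigenvalues below $E+3\epsilon$; consequently $g(H_0^\Lambda)$ and $g(H_1^\Lambda)$ are finite rank and $g(H_1^\Lambda)-g(H_0^\Lambda)$ is trace class. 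All the assumptions of Proposition~\ref{Krein} are met.

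Then I would simply apply Proposition~\ref{Krein} and estimate the right-hand side:
\[
\Theta_n(\omega_{+}+\delta)-\Theta_n(\omega_{-})\le K_1 e^{b}+K_2\bigl(\ln(1+\|g'\|_\infty)\bigr)^d\|g'\|_1 .
\]
Here $\|g'\|_1=\|\rho'\|_1=\rho(+\infty)-\rho(-\infty)=1$, while $\|\rho'\|_\infty\le 1/\epsilon$ gives $\|g'\|_\infty\le 1/\epsilon$. For $0<\epsilon\le 1/2$ we have $1+1/\epsilon\le 2/\epsilon$ and $|\ln\epsilon|=\ln(1/\epsilon)\ge\ln 2$, hence $\ln(1+1/\epsilon)\le\ln 2+|\ln\epsilon|\le 2|\ln\epsilon|$ and $\bigl(\ln(1+\|g'\|_\infty)\bigr)^d\le 2^d|\ln\epsilon|^d$. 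Combining, $\Theta_n(\omega_{+}+\delta)-\Theta_n(\omega_{-})\le K_1 e^{b}+2^d K_2|\ln\epsilon|^d$, and absorbing the first term into the constant yields the claim with $\tilde C=K_1 e^{b}+2^d K_2$ (legitimate since in the range of $\epsilon$ relevant for Theorem~\ref{Wegner Breather} one has $|\ln\epsilon|\ge 1$; otherwise one enlarges $\tilde C$ by the harmless factor $(\ln 2)^{-d}$).

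I do not expect a genuine obstacle in this lemma: the substantive analytic input — the spectral shift function bound of Proposition~\ref{Krein}, proved in Section~\ref{s:SSF} via Theorem~\ref{thm:singular-values} — is already in hand, and what remains is bookkeeping. The two places that do need a moment's care are (i) observing that two consecutive configurations in the telescoping differ by a \emph{single} single-site bump, so that the ``supported in a cube of side length one'' hypothesis of Proposition~\ref{Krein} literally applies, and (ii) checking that the energy window $[E+\epsilon,E+3\epsilon]$ carrying $g'$ still lies in $(-\infty,b]$ — this is exactly where the gap of width $1$ between $b-1$ and $b$ in the standing assumption $[E-\epsilon,E+\epsilon]\subseteq(-\infty,b-1]$ is consumed.
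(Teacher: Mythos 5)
Your proposal is correct and follows essentially the same route as the paper: identify $H_L(\tilde\omega^{(n,\delta)}(\omega_-))$ and $H_L(\tilde\omega^{(n,\delta)}(\omega_+ +\delta))$ as a pair differing by one nonnegative single-site bump supported in a unit cube, apply Proposition~\ref{Krein} with $g=\rho(\cdot-(E+2\epsilon))$, and bound $\ln(1+1/\epsilon)\le 2|\ln\epsilon|$ for $\epsilon\le 1/2$. You merely spell out the hypothesis checks (support of $g'$ in $(-\infty,b]$ via the width-one gap, trace class via finite rank) and the absorption of $K_1e^b$ into $\tilde C|\ln\epsilon|^d$ more carefully than the paper does, which is harmless.
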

		\begin{proof}
		Let $g(\cdot) :=\rho_{E+2\epsilon}( \cdot) := \rho(\cdot - (E+2\epsilon))$.
		By our choice of $\rho$, $g$ has support in $(- \infty, b]$, $\|g^\prime\|_\infty \leq 1/\epsilon$ and $\| g^\prime \|_1 = 1$.
		We define the operators 
		\begin{align*}
		H_0 & := H \left( {\tilde{\omega}^{(n,\delta)}(\omega_{-})} \right) \\
		H_1 & := H \left( {\tilde{\omega}^{(n,\delta)}(\omega_{+} + \delta)} \right).
		\end{align*}
		These are lower semibounded operators with purely discrete spectrum and since $g$ has support in $(- \infty, b]$, the difference $g(H_1)-g(H_0)$ trace class.
		By the previous Proposition
		\begin{equation*}
		\Theta_n(\omega_{+} + \delta) - \Theta_n(\omega_{-}) = \mathrm{Tr} \left[ \rho_{E+2\epsilon}(H_1) - \rho_{E + 2 \epsilon}(H_0) \right] \leq K_1 e^b + K_2 \left( \ln(1 + 1/\epsilon) \right)^d.
		\end{equation*}
		We assumed $0 < \epsilon \leq \frac{1}{2}$, thus $1 + \epsilon \leq \epsilon^ {-1}$ and
		\begin{equation*}
		\ln(1 + 1/\epsilon) =  \ln(1 + \epsilon) - \ln \epsilon  \leq - 2 \ln \epsilon = 2 \lvert \ln \epsilon \rvert
		\end{equation*}
		and $1 \leq \lvert \ln \epsilon \rvert \leq \lvert \ln \epsilon \rvert^d$ which proves the Lemma.
		\end{proof}
Putting everything together yields
\begin{equation}
\EE \left[ \mathrm{Tr} \left[ \chi_{[E- \epsilon, E + \epsilon]}(H_{\omega,L}) \right] \right]
\leq 
\left( K_1 e^{b} + 2^d K_2  \right)
\|\nu_\mu\|_\infty
\cdot
\delta
\left\lvert\ln \epsilon \right\rvert^d L^d.
\end{equation}
and bearing in mind that 
$\delta = 2 \cdot \left(\frac{ 4 \epsilon }{\kappa }\right)^{1/M}$
we obtain \eqref{eq: Wegner Breather}.
This proves Theorem \ref{Wegner Breather}.


  \def\cprime{$'$}\def\polhk#1{\setbox0=\hbox{#1}{\ooalign{\hidewidth
     \lower1.5ex\hbox{`}\hidewidth\crcr\unhbox0}}}
 \def\polhk#1{\setbox0=\hbox{#1}{\ooalign{\hidewidth
    \lower1.5ex\hbox{`}\hidewidth\crcr\unhbox0}}}

\end{document}